\theoremstyle{plain}
\newtheorem{thm}{\protect\theoremname}
\theoremstyle{definition}
\newtheorem{defn}{\protect\definitionname}
\theoremstyle{plain}
\newtheorem{lem}{\protect\lemmaname}
\theoremstyle{definition}
\newtheorem{example}{\protect\examplename}
\theoremstyle{plain}
\newtheorem{cor}{\protect\corollaryname}
\theoremstyle{plain}
\theoremstyle{remark}
\providecommand{\corollaryname}{Corollary}
\providecommand{\definitionname}{Definition}
\providecommand{\examplename}{Example}
\providecommand{\factname}{Fact}
\providecommand{\lemmaname}{Lemma}
\providecommand{\theoremname}{Theorem}
\DeclareMathOperator{\rank}{rank}
\begin{document}

\title{On Scaling Rules for Energy of VLSI Polar Encoders and Decoders\thanks{Submitted in part for presentation at the 2016 IEEE International Symposium on Information Theory}}
\author{Christopher G. Blake and Frank R. Kschischang\\Department of
	Electrical \& Computer Engineering\\University of Toronto\\
	\texttt{\small christopher.blake@mail.utoronto.ca} \texttt{\small frank@comm.utoronto.ca}}
\maketitle

\begin{abstract}
It is shown that all polar encoding schemes of rate $R>\frac{1}{2}$
of block length $N$ implemented according to the Thompson VLSI model
must take energy $E\ge\Omega\left(N^{3/2}\right)$. This lower bound
is achievable up to polylogarithmic factors using a mesh network topology
defined by Thompson and the encoding algorithm defined by Arikan.
A general class of circuits that compute successive cancellation decoding  adapted from Arikan's butterfly network algorithm is defined.  It is shown that such decoders implemented on a rectangle grid for codes of rate $R>2/3$
must take energy $E\ge\Omega(N^{3/2})$, and this can also be reached
up to polylogarithmic factors using a mesh network. Capacity approaching sequences of energy optimal polar encoders and decoders, as a function of reciprocal gap to capacity $\chi = (1-R/C)^{-1}$, have energy that scales as
$\Omega\left(\chi^{5.325}\right)\le E \le O\left(\chi^{7.05}\log^{4}\left(\chi\right)\right)$.
\end{abstract}

\section{Introduction}
\IEEEPARstart {C}{ontinuing} recent work on the energy of VLSI error control circuits
\cite{GroverInfoFriction,groverFundamental,blakeConstantPePaper,BlakeLDPCAlmostSure,blakeKschischangEnergyLatencyReliabilityTradeoffs, ganesanPaper},
in this paper we provide lower and upper bound scaling rules
on the energy of VLSI implementations of polar encoders and decoders.
In particular, we show that all polar encoders of constant switching activity factor for codes of rate $R>1/2$
have energy that scales at least as $\Omega\left(N^{3/2}\right)$.
 We describe a class of circuits based on the polar decoding algorithm suggested by Arikan
in \cite{ArikanPolar}. We show that circuits of this type for polar codes of rate $R>2/3$ must take at least $\Omega(N^{3/2})$
energy when its output nodes are arranged on a rectangular
grid. The mesh network topology can also reach this lower bound up
to polylogarithmic factors by using circuits with switching activity factor that decreases with increasing block length.

In a companion paper \cite{blakeKschischangEnergyLatencyReliabilityTradeoffs}, we analyze the tradeoffs in the number of clock cycle, energy, and block error probability for encoder and decoder circuits. For sequences of circuits with variable switching activity factor, Grover \cite{GroverInfoFriction} showed similar scaling rules for the energy of encoders and decoders as a function of block error probability. In particular, these results show that coding schemes with block error probability that scales exponentially in block length $N$ must have energy that scales as $\Omega(N^{3/2})$, and our paper \cite{blakeKschischangEnergyLatencyReliabilityTradeoffs} shows that schemes with constant switching activity factor that reach this lower bound must have number of clock cycles $T\ge \Omega (\sqrt N)$. There exist generalized polar decoders with asymptotic block error probability that scale as $\Theta(e^{-N^{1-\epsilon}})$ for any $\epsilon>0$ \cite{HassaniAlishahiUrbanke} (that is, very close to $O(e^{-N})$), and in this paper we discuss how the energy of polar decoders for such codes implemented on a mesh network can get very close to the $\Omega (N^{3/2})$ energy lower bound implied by Grover \cite{GroverInfoFriction}, implying that the energy lower bound is close to tight. However, this requires decoders with switching activity factor that scales as $\Theta(1/N)$, and number of clock cycles that scales close to $\Theta(N^{3/2})$, in contrast to the clock cycle lower bound of \cite{blakeKschischangEnergyLatencyReliabilityTradeoffs}. This is because of the difficulty of parallelization of the successive cancellation decoding algorithm.

In Section~\ref{sec:priorRelated} we discuss how the results of this paper build upon prior work. In Section~\ref{sec:Computational-Model} we discuss the computational
model we will use and present some basic definitions that will be
used. In Section \ref{sec:Polar-Encoders-Lower} we present one of
the main technical results of this paper, showing a lower bound on
the VLSI energy complexity of polar encoding. We discover a similar
lower bound for the complexity of decoding using VLSI circuits derived
from Arikan's successive cancellation decoding algorithm in Section \ref{sec:Arikan-Successive-Cancellation}.
Upper bounds that reach the lower bounds of the previous two sections
are presented in Section \ref{sec:Upper-Bounds-mesh-network} where
the mesh network used by Thompson for sorting and fast Fourier transform
is applied to the polar encoding and decoding algorithms. In Section~\ref{sec:generalizedOnMesh} we study how some of these results can be extended to polar coding with more general generating matrices. In Section
\ref{sec:Energy-Scaling-as-function-gap-to-capacity} we show how
these upper and lower bound results, when combined with finite length
analysis of polar coding, results in upper and lower bounds for the
energy of polar decoding as a function of gap to capacity. Finally,
in Section \ref{sec:Future-Work} we discuss some open questions and
areas of future work. 

\emph{Notation:} We use standard Bachmann-Landau
notation in our discussions. The statement $f(x)=O(g(x))$ means that for
sufficiently large $x$, $f(x)\le cg(x)$ for some positive constant
$c$. The statement $f(x)=\Omega(g(x))$ means that for sufficiently
large $x$, $f(x)\ge cg(x)$ again for some positive constant $c$.
The statement $f(x)=\Theta(g(x))$ means that there are two positive
constants $b$ and $c$ such that $b\le c$ and for sufficiently large
$x$, $bg(x)\le f(x)\le cg(x)$. The statement $f(x)=o(g(x))$ means that for sufficiently large $x$ and all $c>0$, $f(x) < cg(x)$.

We let the symbol $[N]=\{1,2,\ldots , N\}$ denote the set of integers from $1$ to $N$.

 Given a set of indices
$X,Y\subseteq [N]$, and a vector $V$ of length $N$, we define
the notation $V(X)$ to be the subvector of $V$ formed by the indices in
$X$. As well, given an $N\times N$ matrix $A$, the notation $A(X,Y)$
refers to the submatrix formed by the rows with indices in $X$ and
columns with indices in $Y$. The notation $A(X)$ refers to the submatrix of $A$ formed by the rows with indices in $X$ and all the columns.

\section{Prior Related Work}\label{sec:priorRelated}

This paper analyzes the VLSI complexity of polar encoders and decoders. It was recently discovered that the general technique of polar coding was first discovered by Stolte in \cite{stoltePhDThesis}, though these results were never published and the author did not conjecture that this construction reaches capacity. Arikan \cite{ArikanPolar} independently discovered this technique and proved that such codes can reach capacity. Our work in Sections \ref{sec:Arikan-Successive-Cancellation} and \ref{sec:Upper-Bounds-mesh-network} take inspiration from polar encoding and decoding graphs presented in the Arikan paper. 

Our work in Section \ref{sec:Polar-Encoders-Lower} involves a lower bound for circuits that compute polar-encoding functions. The lower bounding technique comes from Thompson \cite{ThompsonThesis}. The key lemma needed is Lemma~\ref{lem:All-rectangle-pairs}, which is analogous to a property of the discrete Fourier transform (DFT) matrix proved by Valiant in \cite[Lemma 4]{valiant76} and by Tompa in \cite[Lemma 3]{tompaTimeSpaceTradeoffs}, though we use a different technique to derive this property.

In Section~\ref{sec:Arikan-Successive-Cancellation} we study the butterfly network graph proposed by Arikan \cite{ArikanPolar} for polar decoding. Our key lemma shows that the minimum bisection width of this graph's output nodes is $N$. This result is similar to the result of \cite{BornsteinBisectionButterfly} which shows that the minimum bisection width of \emph{all} the nodes the butterfly network graph is $2(\sqrt{2}-1)N+o(N)\approx 0.82 N$. Because of our circuit lower bounding technique, the minimum bisection width of the output nodes is required, and not all the nodes of the graph, motivating our approach.

In Section~\ref{sec:Upper-Bounds-mesh-network} we show how a mesh network can achieve our encoding and decoding energy lower bounds up to polylogarithmic factors. A mesh network DFT algorithm was proposed by Stevens \cite{StevensFourierSubroutine} and shown by Thompson \cite{ThompsonThesis} to reach the DFT VLSI complexity lower bounds.

There have been a number of papers on practical VLSI implementations of polar encoders and decoders \cite{DizdarArikanPolarCodeImplementation, BalatsoukasRaymondGrossBurg,AlamdarKschischang,lerouxTalVardyGrossPolar}, though a theoretical analysis of how the energy of such circuits scale has not been performed. However, these results show that practical polar coding circuits compete well with other error control codes, motivating our theoretical analysis.

\section{Computational Model\label{sec:Computational-Model}}

The model we will use in this paper, which we call the Thompson VLSI model, is adapted from Thompson in \cite{ThompsonThesis}. The precise
model we will use is defined in \cite{blakeConstantPePaper}, with a few minor differences. In the previous model, lower bounds were presented in terms of wire width and a technology constant. In this paper, since we are concerned with scaling rules, to avoid unnecessary notation, we will let these constants be $1$. As well, in this paper, we let the symbol $q$ represent the switching activity factor (that is, the fraction of the circuit that is active on average per clock cycle). In the previous paper $q$ was implicitly $1$. The key circuit parameters we will use are $A$, the circuit area, and $T$, the number of clock cycles. Then, the circuit energy is given by $E=qAT$.
 
  Note that in this model circuit nodes can be considered
the vertices of a graph, and in a circuit's graph two edges connect
a vertex if and only if there is a wire between the two corresponding
nodes. 

\begin{defn}
A\emph{ bisection} of a set of vertices $X$ of a graph $G=\left(V,E\right)$
is a set of edges such that their removal bisects the vertices $X$;
that is, they divide the graph into two disconnected components with
vertices $V_{1}$ and $V_{2}$ such that $\left|\left|X\cap V_{1}\right|-\left|X\cap V_{2}\right|\right|\le1$.
The \emph{minimum bisection width} (MBW) of a set of vertices $X$ is the
size of the bisection of these vertices that is minimal over all such
bisections.
\end{defn}
An important lemma relates the minimum bisection width of a graph
to its circuit area:
\begin{lem}[Thompson]\label{lem:ThompsonLemma} All circuits whose
corresponding graph has MBW $\omega$ have area
bounded by:
\[
A\ge\frac{\omega^{2}}{4}.
\]
\end{lem}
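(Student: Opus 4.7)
The plan is to follow Thompson's classical sweep-line bisection argument for VLSI layouts. Let the circuit be embedded in its rectangular bounding region of dimensions $h \times w$, so that $A \ge hw$. The geometric fact I would use is that in the Thompson model each wire occupies a strip of unit width, so any vertical straight cut is crossed by at most $h$ wires and any horizontal cut by at most $w$ wires.

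First, I would exhibit a small bisection by a sweep. Slide a vertical line from left to right across the grid; as it crosses each successive column, vertices transition from the right side to the left side of the cut. By a discrete intermediate-value argument applied to the count $|X \cap V_1| - |X \cap V_2|$, there is some position at which the imbalance is at most $1$, giving a valid bisection of the output set $X$. The cut crosses at most $h$ wires, so $\omega \le h$. A symmetric horizontal sweep gives $\omega \le w$. Multiplying yields $\omega^2 \le hw \le A$.

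The factor of $1/4$ stated in the lemma is a small technical refinement in Thompson's model: when a single column contains many vertices of $X$, a single straight cut may fail to achieve imbalance at most $1$, and a short horizontal ``staircase'' correction inside one column may be needed to finish balancing, at most doubling the wire count of the cut. Tracking this yields $\omega \le 2 \min(h,w)$ and hence $A \ge \omega^2/4$.

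I do not expect the main obstacle to be the constant $1/4$ — that is just a matter of bookkeeping around wire-width conventions and the imbalance slack in the definition of bisection. The real content of the lemma is the geometric sweep-line argument: the combination of (i) wires having bounded width so only $\min(h,w)$ of them can cross a straight cut, and (ii) the existence of a bisecting cut via a discrete intermediate-value argument. Both ingredients are standard in Thompson's thesis and carry over directly to the present model.
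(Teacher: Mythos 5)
The paper offers no proof of this lemma beyond the citation ``See \cite{ThompsonThesis}'', and your reconstruction is exactly the classical argument from that source: a discrete intermediate-value sweep to locate a bisecting cut, the unit-wire-width bound of at most $\min(h,w)$ crossings for a straight cut, and the single-column jog that costs at most a factor of $2$ and produces the constant $\frac{1}{4}$. Your argument is correct and is essentially the same (indeed, the only) proof the paper is relying on.
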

\begin{IEEEproof}
See \cite{ThompsonThesis}.\end{IEEEproof}
\begin{defn}
A \emph{coding scheme} is a sequence of error control codes of increasing
block length $N$, together with a sequence of \emph{encoding circuits}
and \emph{decoding circuits.} It is associated with a particular channel.
\end{defn}
Let $P_{\mathrm{e}}(N)$ be the block error probability for the code in the scheme with block length $N$. Then we can define the following:
\begin{defn} \cite{blakeKschischangEnergyLatencyReliabilityTradeoffs}
An $f(N)$-coding scheme is a coding scheme in which for sufficiently
large $N$, $P_{\mathrm{e}}(N)\le f(N)$.

Note that this definition classifies coding schemes in terms of their
block error probabilities. A ``good'' coding scheme should thus
have this $f(N)$ scale quickly to $0$. Using the result of \cite{ArikanTeletareRateOfPolarization}
we can see that for every $\epsilon>0$, there exists an $e^{-N^{1/2-\epsilon}}$-
coding schemes using polar codes.
\end{defn}

\section{Polar Encoders Lower Bound\label{sec:Polar-Encoders-Lower}}

In this section we will prove that all polar encoders of rate greater
than $1/2$ must have energy that scales as $\Omega\left(N^{3/2}\right)$.
The main technical result will be Lemma \ref{lem:All-rectangle-pairs},
in which we show a property about the rank of rectangle
pairs of the polar encoding generator matrix.

\subsection{Rectangle Pairs}
We will consider an $N\times N$ matrix $G$. We let $R,C\subseteq[N] $.
\begin{defn}
Let $G(R,C)$ be the submatrix of $G$ formed by selecting the rows
with indices in $R$ and the columns with indices in
$C$. We call such an object a \emph{rectangle} of $G$.\end{defn}
\begin{example}
\label{exa:RectangleExample}Let
\[
G=\left[\begin{array}{cccc}
1 & 2 & 3 & 4\\
5 & 6 & 7 & 8\\
9 & 10 & 11 & 12\\
13 & 14 & 15 & 16
\end{array}\right],R=\left\{ 1,3\right\} ,C=\left\{ 2,4\right\} 
\]
 then 
\[
G\left(R,C\right)=\left[\begin{array}{cc}
2 & 4\\
10 & 12
\end{array}\right].
\]
\end{example}
\begin{defn}
If $A\subseteq [N] $, we let the \emph{relative complement} of $A$ be  
$\bar{A}=[N] \backslash A$, that is, those
elements in $\left\{ 1,2,\ldots N\right\} $ that are not in $A$.
The relevant value of $N$ will depend on context, and when in doubt
will be specified clearly.
\end{defn}
Again, let $R,C\subseteq\left\{ 1,2,\ldots,N\right\} $ and $G$ be
an $N\times N$ matrix.
\begin{defn}
A\emph{ rectangle pair} is an ordered pair of submatrices of a given
matrix $G$, associated with two sets $R$ and $C,$ defined as $\left(G(R,C),G(\bar{R},\bar{C})\right)$.\end{defn}
\begin{example}\label{exa:rectanglePair}
If $G,$$R,$ and $C$ are defined as in Example~\ref{exa:RectangleExample},
then the rectangle pair associated with $R$ and $C$ is:

\[
\left(\left[\begin{array}{cc}
2 & 4\\
10 & 12
\end{array}\right],\left[\begin{array}{cc}
5 & 7\\
13 & 15
\end{array}\right]\right)
\]

\end{example}

We shall also define:
\begin{defn}
	A \emph{$k$-row-reduced rectangle pair} of a matrix $G$ is an ordered
	pair of matrices $(X,Y)$. It is formed by starting with any rectangle
	pair $(A,B)$ of $G$ and deleting $a$ rows from $A$ to form
	$X$ and $b$ rows from $B$ to form $Y$ such that $a+b=k$.\end{defn}
\begin{example}
	A $1$-row-reduced rectangle pair of the matrix $G$ from Example~\ref{exa:rectanglePair} is 
	\[
	\left(\left[\begin{array}{cc}
	10 & 12\end{array}\right],\left[\begin{array}{cc}
	5 & 7\\
	13 & 15
	\end{array}\right]\right).
	\]
	which is formed by deleting a row from the first matrix in the rectangle
	pair
	\[
	\left(\left[\begin{array}{cc}
	2 & 4\\
	10 & 12
	\end{array}\right],\left[\begin{array}{cc}
	5 & 7\\
	13 & 15
	\end{array}\right]\right)
	\]
	of $G$.\end{example}
We will consider the structure of the polar encoding matrix by considering
its rectangle pairs. 

\subsection{Universal Polar Coding Generator Matrix Properties}
\begin{defn}\label{def:recursiveFn}
The \emph{universal polar coding generator matrix}  $G_{n}$ is a matrix defined recursively
by Arikan \cite{ArikanPolar}. Let: 
\[
F_{1}=\left[\begin{array}{cc}
1 & 0\\
1 & 1
\end{array}\right]
\]
and 
\[
F_{n}=\left[\begin{array}{cc}
F_{n-1} & 0\\
F_{n-1} & F_{n-1}
\end{array}\right].
\]
Then 
\[
G_{n}=B_{n}F_{n}
\]
where $B_{n}$ is a permutation matrix interpreted as the bit-reversal
operator. 
\end{defn}
The structure of $B_{n}$ (other than it being a matrix that permutes
the rows of $F_{n}$) will not matter in the proofs that follow.

We will prove a theorem showing that the sum of the ranks (over the
field $\mathbb{F}_{2}$) of the entries of any rectangle pair of $G_{N}$ is
at least $N/2$. This will imply high VLSI complexity for sufficiently
high rate VLSI polar encoders.

We will first consider the ranks of rectangle pairs of $F_{n}$. Note
that $F_{n}$ and $G_{n}$ are $N\times N$ matrices, where $N=2^{n}$.
\begin{lem}
	\label{lem:LemmaForRecursion}Let $X$ be a matrix with entries
	in a field partitioned as
	\[
	X=\left[\begin{array}{c|c}
	A & 0\\\hline
	C & B
	\end{array}\right],
	\]
	where $0$ is a zero submatrix.
	
	Then $\rank(X)\ge \rank(A)+\rank(B)$.\end{lem}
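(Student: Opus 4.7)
The plan is to prove the inequality by producing an explicit set of $\rank(A)+\rank(B)$ linearly independent columns of $X$, exploiting the zero block in the upper right to decouple contributions from the two sides. Rank will be viewed as the dimension of the column space, and the block-triangular shape of $X$ will let me read off dependence relations one block-row at a time.

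First I would set $r_A=\rank(A)$ and $r_B=\rank(B)$ and choose column indices $i_1,\ldots,i_{r_A}$ into $A$ such that the corresponding columns $a_{i_1},\ldots,a_{i_{r_A}}$ are linearly independent, and column indices $j_1,\ldots,j_{r_B}$ into $B$ such that $b_{j_1},\ldots,b_{j_{r_B}}$ are linearly independent. The associated columns of $X$ then have the form $\bigl(\begin{smallmatrix}a_{i_k}\\ c_{i_k}\end{smallmatrix}\bigr)$ on the left and $\bigl(\begin{smallmatrix}0\\ b_{j_l}\end{smallmatrix}\bigr)$ on the right, for suitable subcolumns $c_{i_k}$ of $C$.

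Next I would verify that these $r_A+r_B$ columns of $X$ are linearly independent. Suppose a vanishing combination
\[
\sum_{k=1}^{r_A}\alpha_k\begin{pmatrix}a_{i_k}\\ c_{i_k}\end{pmatrix}+\sum_{l=1}^{r_B}\beta_l\begin{pmatrix}0\\ b_{j_l}\end{pmatrix}=0.
\]
Reading the top block gives $\sum_k\alpha_k a_{i_k}=0$, which forces every $\alpha_k=0$ by the choice of the $a_{i_k}$. Substituting back and reading the bottom block then gives $\sum_l\beta_l b_{j_l}=0$, which forces every $\beta_l=0$ by the choice of the $b_{j_l}$. Hence the $r_A+r_B$ chosen columns of $X$ are linearly independent, so $\rank(X)\ge r_A+r_B$, as required.

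I do not anticipate a serious obstacle here: the only thing to be careful about is that the independence argument cascades in the correct order through the block-triangular structure, i.e.\ the zero in the top-right block is what allows one to eliminate the $\alpha_k$ purely from the top block before invoking independence of the $b_{j_l}$ in the bottom block. The argument is symmetric in rows versus columns, so it could equivalently be carried out using row spaces, but the column version above is cleanest given the placement of the zero block.
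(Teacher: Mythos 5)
Your proof is correct and is essentially the paper's argument in transposed form: the paper selects $\rank(A)+\rank(B)$ linearly independent \emph{rows} of $X$ (those through $A$ and through $B$) and notes they remain independent, which is exactly the dual of your column-space argument that you yourself observe at the end. No gap; the two proofs are the same idea.
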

\begin{IEEEproof}
	There are $\rank(A)$ linearly
	independent rows of $A$, and $\rank(B)$ linearly independent rows
	of $B$. The $\rank(A)+\rank(B)$ rows of $X$ corresponding to these rows are also linearly independent.
\end{IEEEproof}

\begin{lem}
\label{lem:All-rectangle-pairs} All rectangle pairs $\left(A_{n},B_{n}\right)$
of $F_{n}$ have $\rank(A_{n})+\rank(B_{n})\ge\frac{N}{2}$.
\end{lem}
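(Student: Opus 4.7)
The plan is to prove this by strong induction on $n$, leveraging the recursive block structure of $F_n$ together with Lemma \ref{lem:LemmaForRecursion}.

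For the base case $n=1$ one can simply enumerate the (finitely many) rectangle pairs of the $2\times 2$ matrix $F_1$ and verify by hand that in every case the two ranks sum to at least $1 = N/2$. The cases where either $R$ or $C$ is $\emptyset$ or all of $[2]$ give one empty rectangle and one equal to $F_1$ (rank $2$); the remaining cases produce a $1\times 1$ block that is at worst a single entry of $F_1$, and at least one of the two paired entries is nonzero because $F_1$ has no row or column entirely of zeros.

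For the inductive step, fix any rectangle pair $(F_n(R,C), F_n(\bar R,\bar C))$ and split the index sets according to the block structure: write $R^t = R\cap[2^{n-1}]$ and $R^b = R\cap\{2^{n-1}+1,\dots,2^n\}$, and similarly $C^\ell, C^r$. After permuting rows and columns to collect the top block rows first and the left block columns first, the recursive definition of $F_n$ shows that $F_n(R,C)$ is block lower triangular of the form
\[
\begin{pmatrix} F_{n-1}(R^t,C^\ell) & 0 \\ F_{n-1}(R^b,C^\ell) & F_{n-1}(R^b,C^r) \end{pmatrix},
\]
where in the second row of blocks the indices are shifted back into $[2^{n-1}]$. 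Lemma~\ref{lem:LemmaForRecursion} then gives $\rank(F_n(R,C)) \ge \rank(F_{n-1}(R^t,C^\ell)) + \rank(F_{n-1}(R^b,C^r))$. The identical argument applied to $\bar R$ and $\bar C$ yields $\rank(F_n(\bar R,\bar C)) \ge \rank(F_{n-1}(\overline{R^t},\overline{C^\ell})) + \rank(F_{n-1}(\overline{R^b},\overline{C^r}))$, where complementation is taken inside $[2^{n-1}]$.

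The crucial observation—and the step I expect to need the most care in writing down precisely—is that the two resulting summands pair up into rectangle pairs of $F_{n-1}$: the pair $(F_{n-1}(R^t,C^\ell), F_{n-1}(\overline{R^t},\overline{C^\ell}))$ is a rectangle pair of $F_{n-1}$ associated with $R^t,C^\ell \subseteq [2^{n-1}]$, and likewise $(F_{n-1}(R^b,C^r), F_{n-1}(\overline{R^b},\overline{C^r}))$ is a rectangle pair of $F_{n-1}$ associated with (the appropriately shifted) $R^b,C^r$. Applying the inductive hypothesis to each contributes at least $2^{n-2}$, so the total is at least $2\cdot 2^{n-2} = 2^{n-1} = N/2$, completing the induction. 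The main obstacle is purely bookkeeping: making sure that the off-diagonal blocks in both $F_n(R,C)$ and $F_n(\bar R,\bar C)$ line up so that the on-diagonal blocks genuinely come from complementary index sets of $[2^{n-1}]$, which is what licenses the two independent invocations of the inductive hypothesis.
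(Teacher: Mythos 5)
Your proof is correct and follows essentially the same route as the paper's: the same block lower-triangular decomposition of both rectangles, the same rank inequality (Lemma~\ref{lem:LemmaForRecursion}), and the same re-pairing of the diagonal blocks into two rectangle pairs of $F_{n-1}$ to which the inductive hypothesis is applied twice; your version merely makes the index-set bookkeeping explicit where the paper states it implicitly. (One small caveat: in the base case, the claim that at least one entry in each complementary pair of entries is nonzero does not follow from $F_1$ having no zero row or column---the $2\times 2$ identity matrix defeats that heuristic---but the direct enumeration you also invoke settles the base case correctly.)
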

\begin{IEEEproof}
We will use mathematical induction. Note that \[F_{1}=\left[\begin{array}{cc}
1 & 0\\
1 & 1
\end{array}\right],\]
and checking all the possible rectangle pairs $\left(A_{1},B_{1}\right)$
of $F_{1}$ we have that $\rank(A_{1})+\rank(B_{1})\ge1=\frac{N}{2}$.
Now, we assume that for all $k\le n-1$, for all rectangle pairs of
$F_{k}$ denoted $(A_{k},B_{k})$:
\begin{equation}
\rank(A_{k})+\rank(B_{k})\ge\frac{2^{k}}{2}.\label{eq:inductionHypothesis}
\end{equation}

Consider a rectangle pair $\left(A_{n},B_{n}\right)$ of $F_n$. Note that $A_{n}$ can be written as:
\begin{equation}
A_{n}=\left[\begin{array}{cc}
P & 0\\
Q & S
\end{array}\right].\label{eq:AnNotCompact}
\end{equation}
and $B_n$ can be written as:
\[
B_{n}=\left[\begin{array}{cc}
L & 0\\
M & J
\end{array}\right].
\]
where $(P, L)$ and $(S, J)$ are rectangle pairs of $F_{n-1}$.

Observe that, by Lemma~\ref{lem:LemmaForRecursion} 
\begin{equation}
\rank(A_{n})\ge \rank(P)+\rank(S)\label{eq:rankOfAn}
\end{equation}
and
\begin{equation}
\rank(B_{n})\ge \rank(L)+\rank(J).\label{eq:rankOfBn}
\end{equation}
Since $(P,L)$ and $(S,J)$ are rectangle pairs of $F_{n-1}$, by the induction hypothesis (\ref{eq:inductionHypothesis}):
\begin{equation}
\rank(P)+\rank(L)\ge\frac{N}{4}\label{eq:rankLplusrankP}
\end{equation}
and 
\begin{equation}
\rank(S)+\rank(J)\ge\frac{N}{4}.\label{eq:rankJplusrankT}
\end{equation}
Thus, by combining (\ref{eq:rankOfAn}) and (\ref{eq:rankOfBn}):
\begin{eqnarray*}
	\rank(A_{n})+\rank(B_{n}) & \ge\\
	\rank(P)+\rank(S)+\rank(L)+\rank(J)
\end{eqnarray*}
By rearranging the right side of this inequality and directly substituting the bounds of (\ref{eq:rankLplusrankP}) and (\ref{eq:rankJplusrankT}) we get:
\[
\rank(A_{n})+\rank(B_{n})\ge\frac{N}{4}+\frac{N}{4}=\frac{N}{2}.
\]
\end{IEEEproof}
\begin{cor}
\label{cor:sumOfRanksOfGn}For any rectangle pair of $G_{n}$ denoted
$(A,B)$:
\[
\rank(A)+\rank(B)\ge\frac{N}{2}.
\]
\end{cor}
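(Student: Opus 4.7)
The plan is to reduce the statement about $G_n$ directly to Lemma~\ref{lem:All-rectangle-pairs} about $F_n$ by exploiting the fact that $G_n$ and $F_n$ differ only by a row permutation. The key observation is that taking a rectangle of $G_n$ indexed by row set $R$ and column set $C$ is equivalent to taking the rectangle of $F_n$ indexed by the permuted row set $\pi(R)$ (where $\pi$ is the bit-reversal permutation associated with $B_n$) and the same column set $C$, because left-multiplication by a permutation matrix only reorders rows and does not alter which rows appear in the submatrix---only their labels.

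First I would state precisely that $G_n = B_n F_n$ means there is a permutation $\pi$ of $[N]$ such that row $i$ of $G_n$ equals row $\pi(i)$ of $F_n$. Second I would verify that taking complements commutes with applying $\pi$: $\pi(\bar R) = \overline{\pi(R)}$, since $\pi$ is a bijection on $[N]$. Therefore, given any rectangle pair $(A,B) = (G_n(R,C), G_n(\bar R, \bar C))$ of $G_n$, we can write
\[
A = F_n(\pi(R), C), \qquad B = F_n(\pi(\bar R), \bar C) = F_n(\overline{\pi(R)}, \bar C),
\]
so $(A,B)$ is itself a rectangle pair of $F_n$, namely the one associated with row set $\pi(R)$ and column set $C$.

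Third, I would invoke Lemma~\ref{lem:All-rectangle-pairs} directly on this rectangle pair of $F_n$ to conclude $\rank(A) + \rank(B) \ge N/2$. There is essentially no obstacle here; the only mild care needed is the bookkeeping that complementation commutes with the permutation, which is immediate from $\pi$ being a bijection. The proof is therefore a two-line argument once the correspondence between rectangle pairs of $G_n$ and rectangle pairs of $F_n$ is made explicit.
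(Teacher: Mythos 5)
Your proposal is correct and follows essentially the same route as the paper: both identify each rectangle pair of $G_n$ with a rectangle pair of $F_n$ via the row permutation induced by $B_n$ (using that complementation commutes with the bijection) and then invoke Lemma~\ref{lem:All-rectangle-pairs}. The only cosmetic difference is that the paper explicitly notes the submatrices agree only up to a reordering of rows, hence have equal row space and rank, whereas you write the identification as an equality; this does not affect the argument.
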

\begin{IEEEproof}
This Corollary follows by observing that \cite{ArikanPolar} $G_{n}=B_{n}F_{n}$
where $B_{n}$ is a permutation matrix that permutes the rows of $F_{n}$.
For every rectangle pair of $F_{n}$ there is an equivalent rectangle
pair of $G_{n}$, selected by choosing the same columns and choosing
the rows as permuted by $B_{n}$. The rectangles forming such a rectangle
pair will have the same rows, simply permuted. Thus they have the
same row space and thus the same rank. 
\end{IEEEproof}

\subsection{Encoder Circuit Lower Bounds}
We consider below a circuit that computes a polar encoding function.
Such a function is associated with a set $\mathcal{A}$ of free indices. We
denote the vector of free indices $u({\mathcal{A}})\in\{0,1\}^{\left|\mathcal{A}\right|}$.
It is also associated with a vector of frozen indices $u({\bar{\mathcal{A}}})\in\{0,1\}^{N-\left|\mathcal{A}\right|}$.
\begin{defn}\cite{ArikanPolar}
A\emph{ polar encoding function} $f:\{0,1\}^{\left|\mathcal{A}\right|}\rightarrow\{0,1\}^{N}$
associated with free indices $\mathcal{A}$ and frozen vector $u(\bar{\mathcal{A}})$
is defined as
\[
f(u(\mathcal{A}))=u(\mathcal{A})G_{n}(\mathcal{A})+u(\bar{\mathcal{A}})G_{n}(\bar{\mathcal{A}})
\]
where $G_{n}(\mathcal{A})$ is the submatrix of $G_{n}$ formed by
the rows with indices in $\mathcal{A}$, and addition is performed in $\mathbb{F}_2^N$. Such a function is an encoding
function for a code with block length $N=2^{n}$ and rate $R=\frac{\left|\mathcal{A}\right|}{N}$.
\end{defn}

\begin{thm}\label{encodingLowerBoundTheorem}
The area $A$ and the number of clock cycles $T$ for a circuit that
computes a polar encoding function of rate $R$ greater than $\frac{1}{2}$
has area $A$ and number of clock cycles $T$ bounded by

\begin{equation}
AT^{2}\ge\frac{N^{2}\left(1-2R\right)^{2}}{64}=\Omega\left(N^{2}\right)\label{eq:ATsquaredBound}
\end{equation}
and, if such a circuit has switching activity factor $q$, its energy is bounded by
\begin{equation}
E\ge q\frac{N^{3/2}(1-2R)}{8}=\Omega\left(N^{3/2}\right).\label{eq:ATbound}
\end{equation}
\end{thm}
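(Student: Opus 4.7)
The plan is to combine Thompson's bisection/information-flow technique on the circuit's $N$ output pads with the rectangle-pair rank bound of Corollary~\ref{cor:sumOfRanksOfGn}. First I would fix any circuit realizing the encoding function and let $\omega$ denote the minimum bisection width of its $N$ output pads; by Lemma~\ref{lem:ThompsonLemma}, $A\ge\omega^{2}/4$. Choose a minimum bisection and let $C,\bar C\subseteq[N]$ be the output indices on its two sides, with $|C|=|\bar C|=N/2$. The physical placement of the free-input pads splits $\mathcal{A}=\mathcal{A}_{1}\sqcup\mathcal{A}_{2}$, where $\mathcal{A}_{1}$ indexes those free bits whose input pads lie on the same side of the cut as $C$, and $\mathcal{A}_{2}$ indexes the rest.

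Next I would apply the standard Thompson information-transmission inequality across this cut. Since
\[
y(C)=u(\mathcal{A}_{1})G_{n}(\mathcal{A}_{1},C)+u(\mathcal{A}_{2})G_{n}(\mathcal{A}_{2},C)+u(\bar{\mathcal{A}})G_{n}(\bar{\mathcal{A}},C),
\]
and the frozen contribution is a fixed constant, freezing $u(\mathcal{A}_{1})$ makes the $C$-side outputs take exactly $2^{\rank G_{n}(\mathcal{A}_{2},C)}$ distinct values as $u(\mathcal{A}_{2})$ varies. Since at most $\omega T$ bits can traverse the bisection during the $T$ clock cycles, this yields $\omega T\ge \rank G_{n}(\mathcal{A}_{2},C)$, and the symmetric argument in the opposite direction gives $\omega T\ge \rank G_{n}(\mathcal{A}_{1},\bar C)$.

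Then I would invoke Corollary~\ref{cor:sumOfRanksOfGn} on the rectangle pair determined by rows $\mathcal{A}_{2}$ and columns $C$. Since $\overline{\mathcal{A}_{2}}=\mathcal{A}_{1}\cup\bar{\mathcal{A}}$, row-subadditivity of rank gives $\rank G_{n}(\overline{\mathcal{A}_{2}},\bar C)\le \rank G_{n}(\mathcal{A}_{1},\bar C)+|\bar{\mathcal{A}}|=\rank G_{n}(\mathcal{A}_{1},\bar C)+N(1-R)$. Combining this with the rank-sum bound $\rank G_{n}(\mathcal{A}_{2},C)+\rank G_{n}(\overline{\mathcal{A}_{2}},\bar C)\ge N/2$ yields
\[
\rank G_{n}(\mathcal{A}_{2},C)+\rank G_{n}(\mathcal{A}_{1},\bar C)\ge \frac{N(2R-1)}{2}.
\]
Adding the two information-flow inequalities and dividing by $2$ gives $\omega T\ge N(2R-1)/4$, and substituting into $A\ge \omega^{2}/4$ produces (\ref{eq:ATsquaredBound}). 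For the energy inequality (\ref{eq:ATbound}) I would additionally use $A\ge N$ (each of the $N$ output pads occupies unit area under the paper's normalization), writing $E^{2}=q^{2}A\cdot(AT^{2})\ge q^{2}N\cdot N^{2}(2R-1)^{2}/64$ to get $E\ge qN^{3/2}(2R-1)/8$.

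The main obstacle I anticipate is formalizing the information-flow step inside this paper's model: one must verify that the switching activity factor $q$ does not shrink the per-cycle bisection capacity, so that the bound $\omega T$ on bit crossings is independent of $q$ and the factor $q$ enters only through the energy identity $E=qAT$ at the end. A secondary subtlety is ensuring that each free-input pad can be unambiguously assigned to one side of the bisecting curve, which follows from treating input pads as point locations in the layout; the frozen inputs are constants and so contribute only the translation term that drops out of the rank-based counting argument.
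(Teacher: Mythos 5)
Your proposal is correct and follows essentially the same route as the paper: a Thompson bisection of the output pads, the observation that the cross-terms form an $|\bar{\mathcal{A}}|$-row-reduced rectangle pair of $G_{n}$ whose rank sum is at least $N/2-(1-R)N$ by Corollary~\ref{cor:sumOfRanksOfGn}, the information-flow bound across the cut, and finally $A\ge N$ to pass from $AT^{2}$ to the energy bound. The only differences are presentational (you make the $2^{\rank}$ counting and the row-subadditivity step explicit where the paper leaves them implicit), and both arguments land on the same constants.
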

\begin{IEEEproof}
We will follow a similar line of reasoning used by Thompson in analyzing
the complexity of Fourier transform circuits \cite{Thompson}. There
are $N$ output bits of the encoder. We label the indices
of the output nodes on one side of the bisection $L$ (the left side),
and the others $R$ (the right side). The subcircuit containing the
output nodes $L$ will have some inputs bits, labelled
$L_{i}$. Similarly, label the input bits on the right side $R_{i}$.
We denote the vector of inputs on the left side $u(L_i)$, on the right side $u(R_i)$, and the frozen vector $u(\bar {\mathcal A})$.
By simply expanding the vector matrix multiplication, we see that
the left side of the circuit must compute the vector:

\begin{eqnarray*}
	y(L) & = & u(L_{i})G_{n}(L_{i},L)+u(R_{i})G_{n}\left(R_{i},L\right)\\
	&  & +u(\bar{\mathcal{A}})G_{n}\left(\bar{\mathcal{A}},L\right)
\end{eqnarray*}
and similarly the right side must compute the vector:
\begin{eqnarray*}
	y(R) & = & u(L_{i})G_{n}(L_{i},R)+u(R_{i})G_{n}\left(R_{i},R\right)\\
	&  & +u(\bar{\mathcal{A}})G_{n}\left(\bar{\mathcal{A}},R\right).
\end{eqnarray*}
The subcircuits must compute these values only with the input bits
injected into their own input nodes and the bits communicated to them
from the other subcircuit (which of course has access to the other input
nodes). Note that
$\left(G_{n}\left(R_{i},L\right),G_{n}(L_{i},R)\right)$ is an
$\left|\bar{\mathcal{A}}\right|$-row-reduced rectangle pair of $G_{n}$.
Observe from Corollary~\ref{cor:sumOfRanksOfGn} that the sum of the
ranks of these matrices must be at least
$\frac{N}{2}-\left|\bar{\mathcal{A}}\right|=\frac{N}{2}-(1-R)N$, which is greater than $0$ because $R>1/2$.  Thus,
at least $\frac{N}{2}-(1-R)N$ bits must be communicated across this
bisection during the computation. If the circuit has MBW of output bits $\omega$, since
at each clock cycle at most $2\omega$ bits can be communicated across
the bisection:
\begin{equation}
T\ge\frac{\frac{N}{2}-(1-R)N}{2\omega}.\label{eq:TboundInTermsOfOmega}
\end{equation}
By Lemma~\ref{lem:ThompsonLemma}, we have $A\ge\frac{\omega^{2}}{4}$
and thus, combining this with (\ref{eq:TboundInTermsOfOmega}) 
implies
\begin{equation}
AT^{2}\ge\frac{N^{2}\left(2R-1\right)^{2}}{64}.\label{eq:ATsquaredbound}
\end{equation}
Also note that
\[
A\ge N
\]
and thus, combining this inequality with (\ref{eq:ATsquaredbound}) and taking the square root we get
\[
E=qAT\ge q\frac{N^{1.5}(2R-1)}{8}.
\]

\end{IEEEproof}

\section{Arikan Successive Cancellation Polar Decoding Scheme\label{sec:Arikan-Successive-Cancellation}}

In Arikan's original paper on polar coding \cite{ArikanPolar}, the
author presented a Turing-time complexity $O(N\log N)$ algorithm
for computing successive cancellation decoding of polar codes. In
this section, we provide a definition of a polar decoder based on
Arikan's \cite{ArikanPolar} paper, and show that such circuits, when
implemented with output nodes arranged in a rectangular grid, take
energy at least $\Omega(N^{3/2})$.

\subsection{Decoding Complexity Lower Bound}

Below we consider a generalization of the minimum bisection width
of a set of vertices, where instead of dividing the set of vertices
into two sets of equal size, instead we divide the vertices of a set
into two sets, where the size of one of these sets is fixed. 
\begin{defn}
Given a graph $G=\left(V,E\right)$, an $m$\emph{-partition} of a
set of vertices $X\subseteq V$ is an ordered pair $(A,V\backslash A)$
in which $A\subseteq V$ and $\left|A\cap X\right|=m$. The \emph{width}
of this partition is the size of the set of edges connecting the vertices
in $A$ with $V\backslash A$. The $m$\emph{-section width} of a
set $X$ of vertices of a graph $G$ is the minimum width over all
the graph's $m$-partitions of $X$. 

In Figure~\ref{fig:mSectionWidth} we give an simple example of a $2$-partition of a
subset of edges of a graph.

\begin{figure} 
	\centering
	\includegraphics{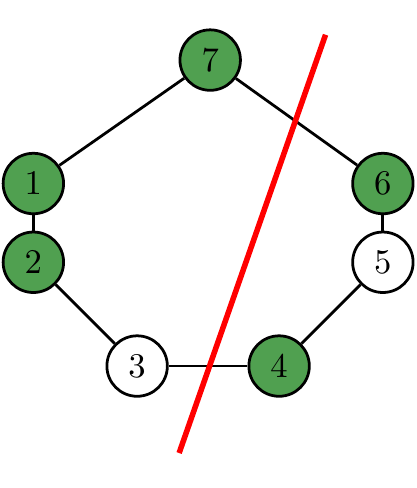}
	\caption{Example of a graph with a $2$-partition of the shaded nodes.
		The labelled partitioning line is also a $1$-partition of the white
		nodes. Inspection shows that such $m$-partitions are minimal. Therefore,
		the $2$-section width of the shaded nodes in this graph is $2$,
		as is the $1$-section width of the white nodes. } \label{fig:mSectionWidth}
\end{figure}

\end{defn}
Note that if a graph $G=(V,E)$ has $2m$ vertices, then the $m$-section
width of $V$ is the same as the graph's minimum bisection width.
\begin{defn}
An $n$\emph{-polar decoding graph,} denoted $P_{n}$, is a generalization
of the graph presented by Arikan in \cite{ArikanPolar} describing
the communication graph of a polar decoding algorithm. It is defined
recursively in Figures \ref{fig:decoderGraphBase} and \ref{fig:decoderGraph}. For the base case, the $1$-polar decoding
graph is the bowtie shaped graph given in Figure~\ref{fig:decoderGraphBase}. An $n$-polar
decoding graph consists of $2^{n}$ nodes called \emph{symbol nodes}
connected to two copies of the $\left(n-1\right)$-polar decoding
graphs as shown in Figure~\ref{fig:decoderGraph}.
\end{defn}
\begin{figure} 
	\centering
	\includegraphics{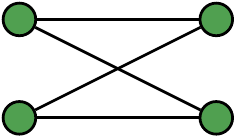}
	\caption{The base case: the decoding graph $P_{1}$.} \label{fig:decoderGraphBase}
\end{figure}

\begin{figure} 
	\centering
	\includegraphics[width=2.8 in]{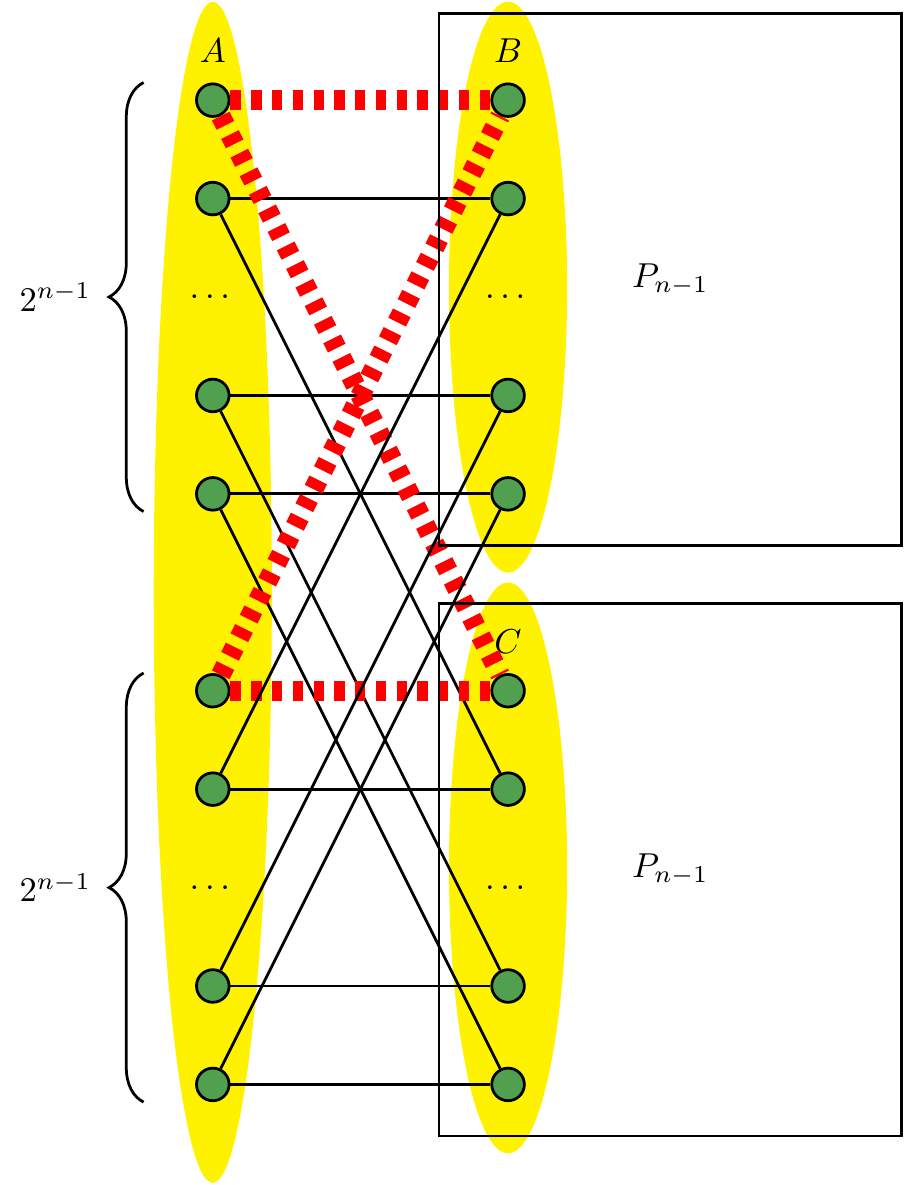}
	
	\caption{A diagram of the recursive structure of a polar decoding
		graph $P_{n}$. The vertices on the left (outlined in a shaded ellipse and labelled $A$) indicate the symbol nodes
		of the graph. The vertices on the right in the two boxes indicate
		the symbol nodes of the two subgraphs (outlined and labelled $B$ and $C$) which are the smaller polar
		decoding graphs $P_{n-1}$.  An example of a
		subgraph that is a bowtie is indicated with edges drawn with a thick,
		dashed line.
		} \label{fig:decoderGraph}
\end{figure}

We shall study the structure of an $m$-partition of a polar decoding
graph.

We call the nodes in the left-most column in the graph of Figure~\ref{fig:decoderGraph} \emph{symbol nodes}. Note that for any $n$-polar decoding graph, there exist $2^{n}$
symbol nodes, as well as symbol nodes of the two $(n-1$)-polar decoding
subgraphs that form the graph. Let the set of symbol nodes of the
larger graph be labelled $A$, the symbol nodes of one of the subgraphs
be $B$, and the other symbol nodes $C$. This labelling of sets is visible in Figure~\ref{fig:decoderGraph}.

Note by inspection that that the bipartite subgraph connecting the
nodes of $A$ with $B\cup C$ consists of \emph{bowties}, subgraphs
containing two vertices from $A$ and a vertex from $B$ and $C$.
An example bowtie is labelled in Figure~\ref{fig:decoderGraph}. A bowtie object
is a subgraph of the polar decoding graph associated with a particular partition. We classify
these bowties according to how the particular partitioning line divides the nodes in the graph.

We now consider a minimum $m$-partition of $A$. Such a partition divides the set of vertices into
two subsets, which we will call the top half and the bottom half.
We can consider how the $2^{n-1}$ bowties connecting $A$ with $B$
and $C$ are split by the $m$-partition. We divide such bowties into
three categories: split bowties, contained bowties, and crossing bowties. 
\begin{defn}
A \emph{split bowtie} is a bowtie in which one element of $A$ is
in the upper half, and one in the lower half. Two examples are given in Fig.~\ref{fig:splitBowties}.
\end{defn}
Note that a split bowtie has $2$ edges crossing the $m$-partition.

\begin{figure}
	
	\begin{subfigure}{0.38\textwidth}
		\includegraphics[width=1.5 in]{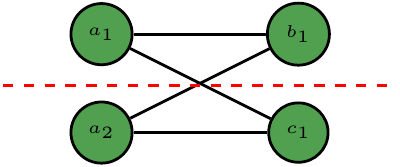}\includegraphics[width=1.5 in]{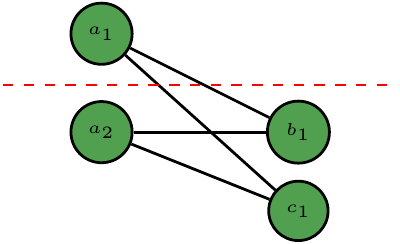}
		\caption{Two examples of split bowties. Split bowties are bowties in which the two nodes in $A$ are on opposite sides of the partition line. It does not matter where the other nodes lie.} \label{fig:splitBowties}
	\end{subfigure}
	\begin{subfigure}{0.38\textwidth}
		\centering

		\includegraphics[width=1.5 in]{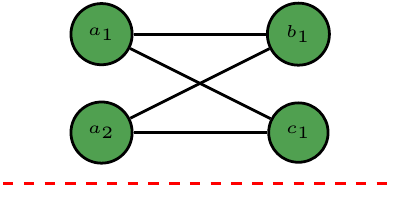}
		\caption{An example of a contained bowtie. Such a bowtie occurs when all the
			nodes of a bowtie are on the same side of the partition. } \label{fig:containedBowtie}
	\end{subfigure}
	\begin{subfigure}{0.38\textwidth}
\centering
		\includegraphics[width=1.5 in]{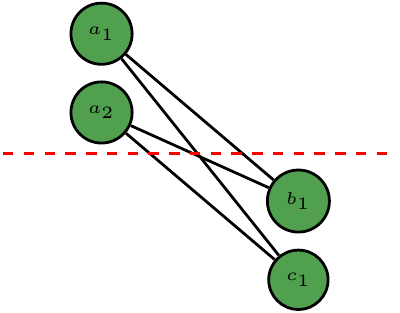}\includegraphics{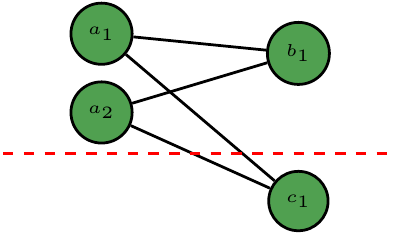}
		\caption{Two examples of crossing bowties. In such bowties the two nodes in $A$ are on the same side of the partition, and at least one of the other nodes in the bowtie is on the other side.} \label{fig:crossingBowties}
	\end{subfigure}
	\caption{Diagrams of split, contained, and crossing bowties. Note that a bowtie is an object associated with a polar decoding graph and a particular partition. In each figure, the nodes labelled $a_1$ and $a_2$ are nodes from the set $A$. Similarly, the nodes labelled $b_1$ are nodes from the set $B$ and those labelled $c_1$ are from the set $C$. The dashed line indicates the relative position of the partitioning line.} \label{fig:1}
\end{figure}

\begin{defn}
A \emph{contained bowtie} is one in which all vertices are either
in the top or bottom half. An example is given in Figure~\ref{fig:containedBowtie}.
\end{defn}
Note that a contained bowtie has no edges crossing the $m$-secting
cut.
\begin{defn}
A \emph{crossing bowtie has two nodes of $A$ on  one side of the
partition line, and at least one of the nodes in $B$ or $C$ on the
other side of the partition line.}
\end{defn}
It should be clear that for any partition of the polar decoding graph, all bowties are either split, contained, or crossing bowties.

We let $G$ be an arbitrary polar decoding graph.
\begin{lem}
\label{lem:OnlySplitandcontainedbowties}Consider an $n$-polar decoder
graph, and let $0\le m\le2^{n-1}$. Such a graph has a minimum $m$-section
partition of the symbol nodes of $G$ which contains only split bowties
and contained bowties.\end{lem}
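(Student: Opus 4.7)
The plan is a local exchange argument: starting from any minimum $m$-section partition of $A$, I will repeatedly perform single-vertex neutral moves --- moves that preserve the $m$-section constraint and leave the cut size unchanged --- until no crossing bowtie remains.

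Fix a minimum $m$-section partition $(V_1, V_2)$ and suppose some $P_n$-bowtie $\{a_1, a_2, b_1, c_1\}$ is crossing; by definition, $a_1, a_2$ lie on the same side, say $V_1$, and at least one of $b_1, c_1$ lies in $V_2$. Call such a wrong-side non-$A$ vertex an \emph{offender}, and without loss of generality let $b_1 \in V_2$ be an offender. I propose to move $b_1$ from $V_2$ to $V_1$. Since $b_1 \notin A$, this automatically preserves $|A \cap V_1| = m$. The only cut contributions that change are those of the four edges incident to $b_1$: the two bowtie edges $a_1 b_1, a_2 b_1$ go from cut to uncut (saving $2$), while the two edges inside $b_1$'s $P_{n-1}$ subgraph change status according to where their other endpoints lie. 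Writing $t_B$ and $d_B = 2 - t_B$ for the numbers of those two $P_{n-1}$-neighbors lying in $V_1$ and $V_2$ respectively, the net change in cut size is $-2 + (d_B - t_B)$.

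Minimality of $(V_1, V_2)$ combined with the degree identity pins this change to zero. Because the move is admissible, the change must be nonnegative, giving $d_B - t_B \ge 2$; combined with $d_B + t_B = 2$, this forces $d_B = 2, t_B = 0$, so the move is exactly neutral. After the move, $b_1$ is no longer an offender in its bowtie, which becomes either contained (if $c_1 \in V_1$) or still crossing but with a single remaining offender $c_1$. Because each non-$A$ vertex lies in exactly one $P_n$-bowtie, no other bowtie's classification is affected, so the total number of offenders strictly decreases. Iterating terminates with a minimum $m$-section partition in which every bowtie is either split or contained; the base case $n=1$ is handled similarly (or trivially by inspection).

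The main obstacle is closing the neutrality step. Minimality alone only yields $d_B - t_B \ge 2$, which by itself would still permit a strictly cut-increasing move. The crucial extra ingredient is the structural identity $d_B + t_B = 2$, coming from the recursive definition of $P_n$ in which every symbol node of $P_{n-1}$ has exactly two internal bowtie edges. Without this degree-$2$ identity, the simple single-vertex move could strictly worsen the cut and a substantially more elaborate coordinated rearrangement of several $P_{n-1}$-subgraph vertices would be required. The argument also implicitly relies on the disjointness of the $B$- and $C$-subgraphs, which ensures that the accounting for $b_1$'s incident edges is complete.
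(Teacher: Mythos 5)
Your proof is correct and follows essentially the same exchange argument as the paper: move the wrong-side non-symbol vertices of a crossing bowtie over to the side of its $A$-vertices, and use the fact that each such vertex has exactly two bowtie edges and two remaining edges to show the cut cannot increase. Your single-vertex version with the explicit neutrality and termination accounting is a slightly more careful rendering of the same idea (the paper moves $b_1$ and $c_1$ together and bounds the change by the degree-$4$ observation), but there is no substantive difference in approach.
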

\begin{IEEEproof}
We provide an exchange argument. For a particular minimal $m$-partition, we argue that nodes in a crossing bowtie can be moved to another side, resulting in a contained bowtie without increasing the number of edges crossing the partition line.  By examining the left side of Figure~\ref{fig:crossingBowties}, there are $4$ edges crossing the partition
line. For crossing bowties of this type, moving vertices $b_1$ and $c_1$ to the side containing the vertices in $A$ decreases the
edges of this bowtie crossing the partitioning line by $4$. The degree
of any vertex in this graph is at most $4$, thus this can, at worst,
result in $4$ new edges crossing the partition (which would result
from the two extra edges each on vertices $b_1$ and $c_1$ now crossing
the partition line). A similar argument can be made for crossing bowties
like those in the right side of Figure~\ref{fig:crossingBowties}. Thus, any minimum $m$-section partition of the output
nodes containing crossing bowties can be modified to one that contains
no crossing bowties.
\end{IEEEproof}
We shall prove a lemma regarding the $m$-section width of an $N$-polar
decoding communication graph. 
\begin{lem}
\label{lem:LemmamSectionWidth}Let $0\le m\le2^{n-1}$. Then the $m$-section
width of the symbol nodes of an $n$-polar decoding graph is at least
$2m$.\end{lem}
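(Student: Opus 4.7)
The plan is to proceed by induction on $n$. For the base case $n=1$, the graph $P_1$ is a single bowtie; only $m=1$ is nontrivial, and a brief case analysis on the placement of the two non-symbol vertices confirms that at least $2$ edges must cross any $1$-partition of the two symbol nodes.

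For the inductive step, I would combine Lemma~\ref{lem:OnlySplitandcontainedbowties} with the recursive structure of $P_n$ shown in Figure~\ref{fig:decoderGraph}. The lemma lets me assume a minimum $m$-section partition in which every bowtie is either split or contained, and the recursive picture lets me split the cut edges into three disjoint groups: the bowtie edges connecting $A$ to $B\cup C$, the internal edges of the $B$-subgraph $P_{n-1}$, and the internal edges of the $C$-subgraph $P_{n-1}$. Write $s$ for the number of split bowties and $c_t,c_b$ for the number of contained bowties lying entirely on the top and bottom respectively, so that $s+c_t+c_b=2^{n-1}$ and $m=s+2c_t$. A direct check on the four possible placements of the $B$- and $C$-vertices in a split bowtie shows that every split bowtie contributes exactly $2$ cut edges, while contained bowties contribute none, so the bowtie group contributes $2s$.

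For the two subgraph groups, let $m_B$ and $m_C$ denote the number of $B$- and $C$-symbol nodes on the top side. The induction hypothesis, together with the top/bottom symmetry of the $m$-section width, gives at least $2\min(m_B,2^{n-1}-m_B)$ crossings inside $B$ and at least $2\min(m_C,2^{n-1}-m_C)$ inside $C$. The heart of the argument is the lower bound $\min(m_B,2^{n-1}-m_B)\ge c_t$ (and the analogous bound for $C$): each contained-top bowtie forces its $B$-node to the top, giving $m_B\ge c_t$, and symmetrically $2^{n-1}-m_B\ge c_b$; finally the standing hypothesis $m\le 2^{n-1}$ combined with $m=s+2c_t$ and $s+c_t+c_b=2^{n-1}$ forces $c_t\le c_b$, so $\min(m_B,2^{n-1}-m_B)\ge \min(c_t,c_b)=c_t$. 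Adding the three contributions yields at least $2s+2c_t+2c_t=2(s+2c_t)=2m$ cut edges, closing the induction.

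The main obstacle I anticipate is precisely the inequality $c_t\le c_b$: this is the single place where the hypothesis $m\le 2^{n-1}$ is actually used, and without it $\min(m_B,2^{n-1}-m_B)$ could collapse to $c_b<c_t$ and the whole bound would fail. A smaller subtlety is that the induction hypothesis in its stated form only covers $m_B\le 2^{n-2}$, so I would need a one-line symmetry remark to handle the case $m_B>2^{n-2}$ by relabelling top and bottom before applying the hypothesis to $P_{n-1}$.
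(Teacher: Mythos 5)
Your proof is correct and follows essentially the same route as the paper's: induction on $n$, reduction to a minimal partition with only split and contained bowties via Lemma~\ref{lem:OnlySplitandcontainedbowties}, exactly two cut edges per split bowtie, and the induction hypothesis applied to the induced partitions of the two $P_{n-1}$ subgraphs, each of which is at least a $c_t$-partition because $c_t\le c_b$. Your write-up is if anything more explicit than the paper's about where the hypothesis $m\le 2^{n-1}$ enters (forcing $c_t\le c_b$) and about the top/bottom relabelling needed to invoke the induction hypothesis when $m_B>2^{n-2}$.
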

\begin{IEEEproof}
We shall prove this by induction. For the base case, the $n=1$ polar
decoding graph given in Figure~\ref{fig:decoderGraphBase} can be shown by inspection to satisfy
the lemma by simply checking the width of all $0$ and $1$-partitions.

We shall assume that the $m$-section width, where $0\le m\le2^{k-2}$
, of a $\left(k-1\right)$-polar decoding graph is $2m$.

We consider a minimal $m$-partition that contains only split and contained bowties that exists by Lemma~\ref{lem:OnlySplitandcontainedbowties}. Denote one half of the partition
the upper half and the other the lower half. Without loss of generality
we assume the upper half contains $m$ nodes of $A$.  Let the number
of contained bowties in the upper half be $C_{\mathrm{upper}}$, and
the number in the lower half $C_{\mathrm{lower}}$. Let the number
of split bowties be $S$. Since the upper half contains $m$ nodes
of $A$, then
\begin{equation}
2C_{\mathrm{upper}}+S=m.\label{eq:mInTermsOfCupperandS}
\end{equation}
Note that the number of contained bowties on the lower half must at least equal the number on the upper half, since the number of symbol nodes on the lower half must equal or exceed the number in the upper half. Thus, there must be at least $m$ elements of $B$ and $m$ elements of $C$ on both side of the partition. As well,   at least one of these sides cannot have more than $N/4$ elements (since each of these sets contains only $N/2$ elements in total). Thus, there is an 
$x$-partition of both $B$ and $C$ induced by the partition, where
$C_{\mathrm{upper}} \le x\le N/4$.  Thus, each of these partitions must have
at least $2C_{\mathrm{upper}}$ edges crossing the partition line
by the induction hypothesis. In addition, there are $2$ edges crossing
the partition line for each split bowtie (easily observed by inspecting
Figure~\ref{fig:splitBowties}). Thus, the number of edges crossing the partition line
is at least
\[
\mathrm{Edges\ crossing}\ge4C_{\mathrm{upper}}+2S=2m
\]
where we have applied (\ref{eq:mInTermsOfCupperandS}), proving the theorem.\end{IEEEproof}

We will consider algorithms whose communication graph is based on the polar decoding graph. However, bits corresponding to certain symbol nodes which are frozen obviously do not need to have their own node in a communication graph. Thus we consider a frozen-bit polar decoding graph.

\begin{defn} An  $n$-\emph{frozen bit polar decoding graph} associated with frozen bit indices $\mathcal{\bar{A}}$ is a graph obtained by deleting the symbol nodes corresponding to   $\mathcal{\bar{A}}$ from $P_n$ and also the edges to which they are connected. The symbol nodes that remain are called the \emph{unfrozen nodes}. Such a graph is a decoding graph for a rate $R=1-\frac{|\mathcal{\bar{A}}|}{N}$ code.
	\end{defn}
	Note that this is a natural simplification of the polar decoding graph once frozen bits are considered. However, this is not the only possible simplification. In this paper we only consider simplifications that involve deletion of the nodes corresponding to the frozen bits.
	
	Once the symbol nodes corresponding to frozen bits are deleted, we then consider the bisection width of the remaining symbol nodes.
	
\begin{cor}\label{cor:unfrozenSymbolNodesMBW} The minimum bisection width $\omega$ of the unfrozen nodes of any $n$-frozen bit polar decoding graph in which $R>2/3$ is at least:
	\[ \omega \ge N(3R-2)
	\]
	\end{cor}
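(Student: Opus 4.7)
The plan is to reduce a bisection of the unfrozen nodes in the frozen-bit decoding graph back to an $m$-partition of the full symbol set $A$ in the original graph $P_n$, so that Lemma~\ref{lem:LemmamSectionWidth} can be invoked. Given any bisection of the unfrozen nodes, I would extend it to a partition $(V_1,V_2)$ of the full graph $P_n$ by reintroducing each frozen symbol node and assigning it to one side or the other. The key accounting identity is that the number of edges in $P_n$ crossing the extended partition equals the number of edges in the frozen-bit graph crossing the original bisection, plus the number of $P_n$-edges incident to frozen nodes that happen to cross. Call these three quantities $w_P$, $w_f$, and $e_F$, so that $w_f = w_P - e_F$.

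The first term is handled by Lemma~\ref{lem:LemmamSectionWidth}: if $m = |V_1\cap A|$, then $w_P \ge 2\min(m,N-m)$. The second term is handled by inspection of the bowtie structure in Figure~\ref{fig:decoderGraph}: each symbol node of $P_n$ belongs to exactly one bowtie, in which it is connected to one node of $B$ and one node of $C$, so every frozen symbol node has degree exactly $2$ in $P_n$. Consequently $e_F \le 2|\bar{\mathcal{A}}| = 2(1-R)N$, regardless of how the frozen nodes are reassigned.

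The remaining freedom is the choice of the extension, and it is ours to make, since the frozen symbol nodes are not present in the frozen-bit graph and so the bisection does not constrain their placement. The cleanest choice that gives exactly the claimed bound is to put every frozen node on the side of the bisection containing the larger (or tied) half of the unfrozen nodes, say $V_2$. Then $|V_1\cap A| = \lceil RN/2\rceil$, which lies in the range $[0,N/2]$ required by Lemma~\ref{lem:LemmamSectionWidth}, so $w_P \ge 2\lceil RN/2\rceil \ge RN$. Combining gives
\[
w_f \;\ge\; RN - 2(1-R)N \;=\; N(3R-2),
\]
and since the initial bisection was arbitrary this lower bound transfers to $\omega$.

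The only real obstacle is the bookkeeping of $e_F$: one must verify that the frozen symbol nodes really do have degree $2$ in $P_n$ and that no edges between two frozen nodes are being double-counted. Both follow immediately from the recursive bowtie construction (each $a$-vertex sits in exactly one bowtie, connecting it to a single $B$-vertex and a single $C$-vertex, and no bowtie edge joins two $A$-nodes). Everything else in the argument is routine arithmetic.
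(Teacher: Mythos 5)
Your argument is correct and is essentially the paper's own proof: both reinstate the frozen symbol nodes, bound the newly crossing edges by $2(1-R)N$ using the fact that each symbol node has degree $2$ in $P_n$, and invoke Lemma~\ref{lem:LemmamSectionWidth} on the resulting $m$-partition with $m$ roughly $RN/2$. The only cosmetic differences are that the paper argues by contradiction and does not specify where the frozen nodes are placed, whereas you place them all on one side explicitly (and your $\lceil RN/2\rceil$ should be $\lfloor RN/2\rfloor$, a parity detail that neither version worries about).
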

\begin{IEEEproof}
	Suppose not. Then, consider the unfrozen symbol nodes minimal bisection with $W<N(3R-2)$ nodes crossing it. Now, add the frozen symbol nodes and their edges back to this graph. There are at most $(1-R)N$ such nodes to be added, and thus they can increase the number of edges in the graph by at most $2(1-R)N$. At most all these edges can cross the partition line, and thus this partition line can have at most $W+2(1-R)<RN$ edges crossing it. Note that the partition line forms an $m$-partition of all the symbol nodes, where $m\ge RN/2$, But, by Lemma~\ref{lem:LemmamSectionWidth}, any such $m$-partition must have at least $RN$ edges crossing it, resulting in a contradiction.
	\end{IEEEproof}
	
\subsection{Decoder VLSI Lower Bounds}

 Note that a Thompson circuit
is associated with a graph. In the course of a computation, messages
will be passed from node to node in the circuit. Each binary message
passed corresponds to another edge in the computation's \emph{communication
graph}. We will define a polar decoder (a type of circuit) in terms
of a circuit's communication graph.

 We first define contracting vertices
and subdividing edges.
\begin{defn}
Any two vertices of a graph $u$ and $v$  can be \emph{contracted} by joining two vertices
together, resulting in a new vertex $v'$. The vertex $v'$ will have all the edges corresponding to both $u$ and $v$ with the exception of any edges that joined $u$ and $v$ in the original graph, which are deleted.
\end{defn}

\begin{defn}
	An edge of a graph can be \emph{subdivided} by being replaced by two edges with a vertex in the middle.
\end{defn}

Consider a frozen polar decoding graph. Such a graph has $RN$ symbol nodes. Now, consider a process in which vertices and edges are added to this graph, and some edges are subdivided and some vertices are contracted, but no two symbol nodes are contracted.

\begin{defn}
	Such a graph is called a \emph{polar decoding communication graph}. Obviously, such a graph has $RN$ symbol nodes, which are the nodes that correspond to the symbol nodes of the original frozen polar decoding graph. We call these nodes the graph's \emph{output nodes.}
\end{defn}

\begin{lem}
\label{lem:mergingOutputNodes} All polar decoding communication graphs obtained from a rate $R>2/3$ frozen polar decoding graph have minimum bisection width of output nodes at least $N(3R-2)$.
\end{lem}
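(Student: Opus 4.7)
The plan is to show that each of the graph modifications allowed when forming a polar decoding communication graph from a frozen polar decoding graph --- insertion of vertices and edges, subdivision of edges, and contraction of vertex pairs with at most one symbol node --- weakly preserves the minimum bisection width of the output (i.e.\ unfrozen symbol) nodes. Combined with Corollary~\ref{cor:unfrozenSymbolNodesMBW}, which already gives the bound $N(3R-2)$ for the underlying frozen polar decoding graph, induction on the sequence of operations then yields the lemma.

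The point enabling the induction is a natural bijection between the output nodes of the communication graph $G'$ and the unfrozen symbol nodes of the underlying frozen polar decoding graph $G$: insertions and subdivisions introduce only non-output vertices, and the prohibition on contracting two symbol nodes means each unfrozen symbol node of $G$ survives uniquely as one output node of $G'$. In particular, the balance condition in a bisection of output nodes transfers directly across each operation. Given a chain $G = G_0 \to G_1 \to \cdots \to G_k = G'$, I would take a minimum bisection of the output nodes of $G_{i+1}$ with $B$ crossing edges and exhibit a bisection of the output nodes of $G_i$ with at most $B$ crossings. For insertions this is immediate, since every edge of $G_i$ is also an edge of $G_{i+1}$ and the new vertices can simply be dropped. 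For a contraction of $\{u,v\}$ to $v'$, place both $u$ and $v$ on the side assigned to $v'$; edges internal to $\{u,v\}$ that disappear in the contraction do not cross (their endpoints are now on the same side), and every other edge maps bijectively between the two graphs with the same crossing status. For an edge subdivision $(u,v) \to (u,w,v)$, keep $u$ and $v$ on the sides assigned by $G_{i+1}$'s bisection and discard $w$; a short case analysis on the placement of $w$ and the sides of $u$ and $v$ shows that the single edge $(u,v)$ in $G_i$ contributes at most as many crossings as $(u,w)$ and $(w,v)$ together in $G_{i+1}$ (equal in the $u,v$-opposite-side cases and smaller by $2$ in the degenerate $u,v$-same-side-but-$w$-opposite case).

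The main obstacle is the book-keeping for the contraction step, where one has to check both that the induced partition of $G_i$'s symbol nodes is still a balanced bisection --- which follows because at most one of $u,v$ is a symbol node, so the count of output nodes per side is unchanged --- and that the non-symbol pre-image of the contracted vertex can be placed consistently without affecting the number of crossings. Once each operation is shown to be weakly MBW-preserving, stringing the induction together gives
\[
\mathrm{MBW}(\text{output nodes of } G') \;\ge\; \mathrm{MBW}(\text{unfrozen symbol nodes of } G) \;\ge\; N(3R-2),
\]
where the second inequality is exactly Corollary~\ref{cor:unfrozenSymbolNodesMBW}.
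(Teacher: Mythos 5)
Your proposal is correct and follows essentially the same route as the paper's proof: reverse the operations one at a time (placing the pre-images $u,v$ of a contracted vertex on the side of $v'$, discarding inserted and subdividing vertices) and observe that each reversal cannot increase the number of crossing edges, then invoke Corollary~\ref{cor:unfrozenSymbolNodesMBW}. Your explicit check that the balance condition on output nodes is preserved across the operations is a detail the paper leaves implicit, but the argument is the same.
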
 
\begin{proof}
	First, observe from Corollary~\ref{cor:unfrozenSymbolNodesMBW} that the minimum bisection width of the symbol nodes of the original frozen polar decoding graph is at least $N(3R-2)$. Suppose that the polar decoding communication graph obtained from this graph has minimum bisection width less than this. Then consider the partition with $W<N(3R-2)$ edges crossing the partition line. We can reverse the operations that obtained the the communication graph and this can only decrease the number of edges crossing the partition line, resulting in a bisection of less than $N(3R-2)$ for the original frozen polar decoding graph, a contradiction. 
	To see this, consider what happens when a vertex $v'$ that was contracted has this operation reversed, resulting in two vertices with $u$ and $v$ with (possibly) an edge between them. Place $u$ and $v$ on the same side of the partition previously occupied by $v'$. It is then obvious that any edge between them cannot increase the width of this partition. It is also obvious that deleting any vertices and edges that were added can only decrease the the number of edges crossing the partition line. The same is true for reversing a subdividing process.
\end{proof}

\begin{defn}
	An \emph{Arikan polar decoding circuit} is a circuit whose associated communicaiton graph is also a polar decoding communication graph.
\end{defn}

Note that in our model, a Thompson circuit is created by placing nodes and edges each in a grid of squares of side length $1$.  Consider placing a  grid of (possibly larger) squares with integer side length on top of any Thompson circuit and with sides aligned to the smaller grid of squares defining the VLSI circuit. Then each square in the grid will contain some output nodes. 
\begin{defn}
	A \emph{rectangle grid output circuit} with $N$ output nodes is a circuit in which there is a grid of squares that can be placed upon the circuit, and there is a $\sqrt N \times \sqrt N$ array of larger grid squares, each which contains exactly one output node. 
\end{defn}
\begin{example}
	The mesh network defined in Section~\ref{sec:Upper-Bounds-mesh-network} is an example of a rectangle grid output circuit.
\end{example} 
We suspect that our scaling rule lower bounds for polar decoders would extend to implementations that are not necessarily rectangle grid output circuits, however for simplicity we only present our results for such circuits. A generalization of the following lemma to a broader class of circuits would be sufficient:

\begin{lem}\label{lem:rectangleMBWEnergyLemma}
All rectangle grid output circuits with $\Theta (N)$ output nodes
and with a communication graph with minimum bisection width $\omega$
have energy $E\ge\Omega\left(\omega \sqrt{N}\right)$.\end{lem}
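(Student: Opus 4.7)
The plan is to combine a geometric lower bound on the bounding-box dimensions coming from the rectangle-grid hypothesis with a clock-cycle lower bound obtained by applying the MBW hypothesis to a single straight-line bisection of the physical layout. First I would observe that because the $\sqrt{N}\times\sqrt{N}$ array of larger squares (each of integer side $s\ge 1$) is contained in the layout, the bounding box has dimensions $w\times h$ with $w,h\ge s\sqrt{N}\ge\sqrt{N}$ and area $A=wh$; this is the only way the rectangle-grid hypothesis enters the argument.

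Next I would choose the horizontal line at the $y$-coordinate that separates the top $\sqrt{N}/2$ rows of output squares from the bottom $\sqrt{N}/2$. Cutting every wire that crosses this line disconnects the layout into a top half and a bottom half and splits the $N$ output nodes exactly evenly, so the resulting edge set is a valid bisection of the output nodes in the communication graph. Because wires have unit width and the line has length at most $w$, at most $w$ wires cross it. Applying the MBW hypothesis to this particular bisection, at least $\omega$ edges of the communication graph (messages) must cross it. Since each of the at most $w$ wires can carry $O(1)$ bits per clock cycle, the total number of messages that can cross in $T$ cycles is $O(wT)$, which forces $T\ge\Omega(\omega/w)$. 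Combining,
\[
AT = wh\cdot T \ge wh \cdot \Omega\!\left(\frac{\omega}{w}\right) = \Omega(h\omega) \ge \Omega(\omega\sqrt{N}),
\]
so $E=qAT\ge\Omega(\omega\sqrt{N})$ when $q$ is treated as a constant.

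The main obstacle I expect is simply justifying that the horizontal straight-line cut genuinely induces a graph-theoretic bisection of the output nodes, so that the MBW hypothesis delivers the $\omega$ lower bound on messages across this particular cut; this is immediate from the rectangle-grid hypothesis because removing the wires crossing the line disconnects the top $N/2$ output nodes from the bottom $N/2$. Everything else is a routine combination of the Thompson-style wire-density bound (at most $w$ wires can cross a horizontal segment of length $w$) with the rectangle-grid bound $\min(w,h)\ge\sqrt{N}$; a vertical cut would give $AT\ge w\omega$ instead, and either way the factor $\omega\sqrt{N}$ arises from $\min(w,h)\ge\sqrt{N}$.
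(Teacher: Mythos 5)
Your derivation of $AT\ge\Omega(\omega\sqrt{N})$ is sound: the rectangle-grid hypothesis does give $\min(w,h)\ge\sqrt{N}$, the straight-line cut between the top and bottom halves of the output-square array is a legitimate bisection of the output nodes, and the Thompson wire-density argument then yields $T\ge\Omega(\omega/w)$ and hence $AT\ge\Omega(h\omega)$. The gap is in the very last step, where you pass from $AT$ to $E=qAT$ by ``treating $q$ as a constant.'' The lemma is explicitly meant to hold \emph{without} that assumption (the paper remarks immediately after the statement that it does not assume a constant switching activity factor), and this generality is essential downstream: Theorem~\ref{decoderLowerBoundTheorem} is applied to circuits such as the serialized mesh-network decoder of Section~\ref{sub:Decoding-Mesh-Network}, whose switching activity factor scales as $\Theta(1/N)$. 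For such circuits your bound gives only $E\ge\Omega(\omega\sqrt{N}/N)$, which is far too weak.

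The paper closes this gap by lower-bounding the energy directly rather than through the $AT$ product. It sweeps a family of $\Theta(\sqrt{N})$ bisections of the output nodes across the layout (a middle cut, then pairs of parallel lines shifted one unit at a time, each pair still splitting the output nodes evenly). Each such bisection is crossed by at least $\omega$ communication-graph edges, i.e., at least $\omega$ messages must physically traverse wire segments lying on that cut, and each traversal activates at least a unit length of wire there. Because the $\Theta(\sqrt{N})$ cuts occupy disjoint positions, these energy contributions add, giving $E\ge\Omega(\omega\sqrt{N})$ for \emph{any} switching activity factor. To repair your proof you would need to replace the single-cut $AT$ argument with this multi-cut accounting (or some other argument that charges energy to each message crossing, rather than charging area--time to the whole circuit).
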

\begin{IEEEproof}
See Appendix~\ref{app:uniformOutputRectangleMBWLemma}.\end{IEEEproof}
Note that the above lemma does not assume a constant switching activity factor.
\begin{thm}\label{decoderLowerBoundTheorem}
All rectangle grid output, Arikan polar decoding circuits
have energy that scales at least as $\Omega\left(N^{3/2}\right)$.\end{thm}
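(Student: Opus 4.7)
The plan is to chain together the two preparatory lemmas already proved in this section, since the theorem is essentially their composition. The hypothesis of the theorem (inherited from the section's running assumption that $R > 2/3$) feeds directly into Lemma~\ref{lem:mergingOutputNodes}, which states that the minimum bisection width of the output nodes of the communication graph of any Arikan polar decoding circuit is at least $N(3R-2)$. Since $R$ is a fixed constant strictly greater than $2/3$, this gives $\omega \ge N(3R - 2) = \Omega(N)$.

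Next I would feed this bound on $\omega$ into Lemma~\ref{lem:rectangleMBWEnergyLemma}, which asserts that any rectangle grid output circuit with $\Theta(N)$ output nodes and minimum bisection width $\omega$ has energy $E \ge \Omega(\omega \sqrt{N})$. Substituting $\omega = \Omega(N)$ yields
\[
E \;\ge\; \Omega(\omega \sqrt{N}) \;=\; \Omega\bigl(N \cdot \sqrt{N}\bigr) \;=\; \Omega(N^{3/2}),
\]
which is precisely the claim.

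Because all of the technical content sits in the two lemmas, there is no real residual obstacle; the proof is a direct two-line application. The one thing I would make explicit in writing it up is that the constant $3R - 2 > 0$ does not vanish as $N \to \infty$, so the absorbed constants in the $\Omega$-notation are independent of $N$, and that an Arikan polar decoding circuit indeed has $\Theta(N)$ output nodes (namely the $RN$ unfrozen symbol nodes, which is $\Theta(N)$ for any fixed rate), so the hypothesis of Lemma~\ref{lem:rectangleMBWEnergyLemma} is satisfied. No additional combinatorial or bisection-width analysis is needed beyond what has already been done.
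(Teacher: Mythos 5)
Your proof is correct and follows exactly the paper's argument: the paper also derives the theorem as a direct composition of Lemma~\ref{lem:mergingOutputNodes} and Lemma~\ref{lem:rectangleMBWEnergyLemma}. The extra details you note (that $3R-2>0$ is a fixed positive constant and that the $RN$ output nodes are $\Theta(N)$) are implicit in the paper but worth stating.
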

\begin{IEEEproof}
This flows directly from Lemma~\ref{lem:mergingOutputNodes}
and Lemma~\ref{lem:rectangleMBWEnergyLemma}.
\end{IEEEproof}
It is now trivial to observe that this lower bound
can be reached up to a polylogarithmic factor on a mesh network which we discuss in the Section~\ref{sub:Decoding-Mesh-Network}.

\section{Upper Bounds\label{sec:Upper-Bounds-mesh-network}}

\subsection{Mesh Network}

We will show that, up to polylogarithmic factors, the lower bounds
on the energy of polar encoding and decoding complexity can be reached.
The mesh network topology that we present to meet these bounds derives
from Thompson \cite{ThompsonThesis}. 

\begin{figure} 
	\centering
	\includegraphics[width=2.8 in]{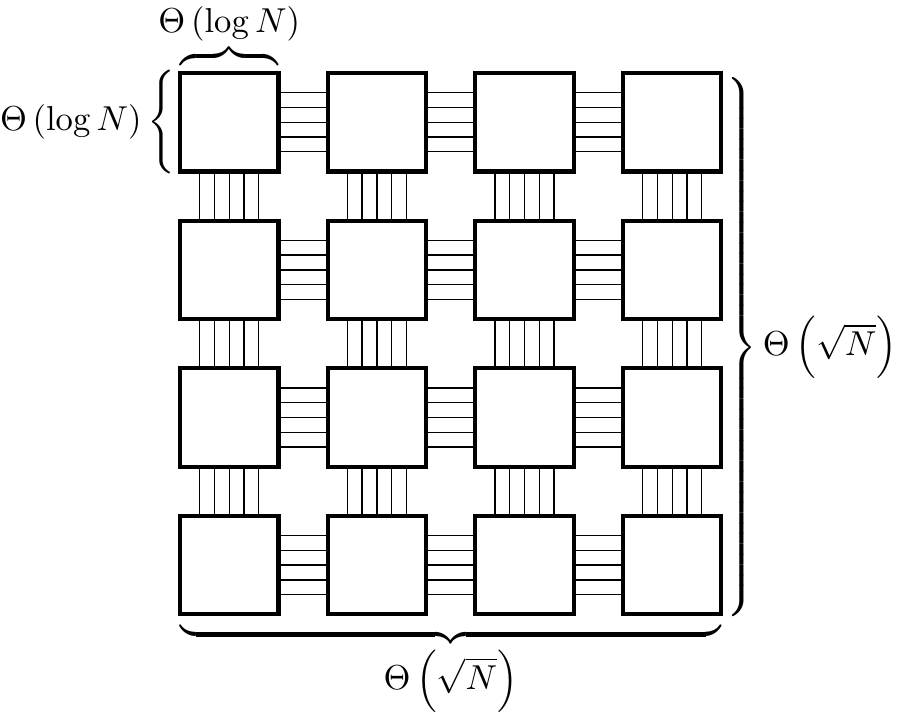}
	\caption{Diagram of a mesh network. A mesh network consists of a grid of $\sqrt N \times \sqrt N$ processor nodes, each with area that scales as $\Theta ( \log^2 N)$. Each node is connected to its at most $4$ neighbors with $\Theta (\log N)$  wires. } \label{fig:meshNetworkDiagram}
\end{figure}

A mesh network consists of a grid of processor nodes. Each processor node is capable of sending and receiving messages to
its adjacent nodes. As shown in Figure \ref{fig:meshNetworkDiagram}, each node has area that scales as $\Theta(\log^2 N)$, and consists of a processor and memory. The processor takes area that remains constant with increasing circuit size, so the amount of memory in each node can scale as $\Theta (\log^2 N)$. Each processor node must also contain instructions on what each node is to compute. The length of the instructions obviously cannot be longer than $O(\log^2 N)$. As is clear from the diagram, each processor node is connected to up to $4$ other processor nodes with $\Theta (\log N )$ wires.

A computation on a mesh network consists of a message-passing procedure and the computation that the nodes are to perform on the messages they receive.  A typical message consists of an \emph{address},
an encoding of the node to which the message is to be sent, and its
\emph{content}, the information meant to be sent. Since there are $\Theta(N)$ processor nodes, an addressing scheme with $\Theta (\log N)$ bits per address is sufficient. As well, a constant-sized message and the address of the node to which the message is to be sent can be passed between adjacent nodes in a single clock cycle, because the width of the wires connecting such nodes also scales as $\Theta(\log N)$. (Actually, messages with size that scales logarithmically in $N$ can be sent).

Multiple messages may be sent simultaneously in a mesh network, but in order for an algorithm to be \emph{valid} for a mesh-network, it
must be that no computational node is required to store more than
$O(\log^{2}N)$ bits in its memory. As well, for an algorithm to be valid, it must also avoid message-passing \emph{conflicts:}
two messages cannot be passed to the same node at the same time. Thus,
given a mesh-network algorithm, we must show that its message passing
order does not result in any conflicts.

In the section below we provide a message-passing procedure for computing
polar encoding.

\subsection{Encoding}

\begin{figure} 
	\centering
	\includegraphics{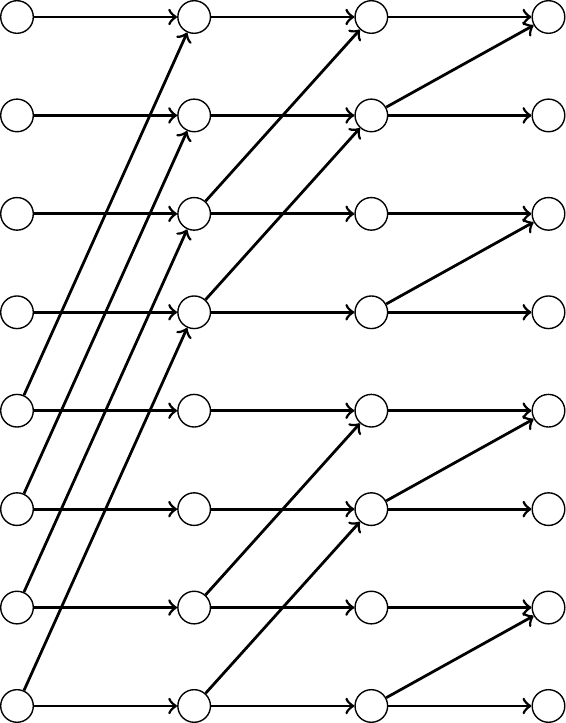}
	\caption{ Example of an encoding graph for $N=2^{3}=8$ taken directly
		from Arikan \cite{ArikanPolar}. The leftmost column of nodes are the input bits, sending their bits to to their adjacent nodes. Upon receiving these bits, the nodes in the second column compute the $\mod 2$ sum of their inputs, and then pass this result to their adjacent nodes. This procedure naturally suits
		implementation on a mesh network. Such encoding reaches the energy complexity
		lower bounds in polar encoding up to polylogarithmic factors.}
	\label{fig:encodingGraph}
\end{figure}

Arikan provides a method for computing polar encoding that naturally
lends itself to implementation on a mesh network. See Figure~\ref{fig:decoderGraph} for
a graphical representation of the Arikan method. In the Arikan method, the input nodes are on the left side of the graph. As well, for polar encoding, some of these nodes represent frozen bits. In the encoding algorithm, messages move left to right. The input nodes (and frozen bit nodes) first pass their bits to the node to which they are connected in the adjacent column of nodes. The nodes in this column proceed to compute the modulo $2$ sum of their inputs, and pass the result to their adjacent nodes on the right. This continues until the final column is reached, resulting in the codeword.

In our proposed mesh-network implementation, each of the $N$ processor nodes corresponds to a row of nodes in the encoding graph of Figure~\ref{fig:encodingGraph}. Obviously, if each message (which corresponds to an edge in the graph) is to be sent one-by one, such an order of the message-passing procedure would avoid conflicts and would be easily be implementable on a mesh network. (In fact, this is the way we propose to do decoding). However, much of the computation for encoding can be done in parallel.  We show in Appendix~\ref{app:procedudeAvoidsConflicts} how a constant fraction of the messages corresponding to edges connecting nodes in adjacent columns can be sent simultaneously in a way that avoids conflicts. We suspect that most sufficiently sparse communication graphs can be implemented on a mesh network in a way that avoids conflicts, however a general analysis of this is beyond the scope of this paper.

\subsection{Analysis of Mesh Network Encoding Algorithm Complexity}

Note that there at $\Theta(\log_{2}N)$ stages. Suppose the number
of clock cycles used by an individual node for reading an address and computing which direction to send its message is $t_{R}$. Suppose that
the complexity for computing parity of the received bit with the current
bit is $t_{P}$. At each stage, the number of hops between processor nodes is at most $O(\sqrt{N}$). There are $\Theta (\log N)$ stages.
Thus, the number of clock cycles required is
\[
T=\Theta\left(\log N\left(\sqrt{N}T_{\mathrm{R}}+T_{\mathrm{P}}\right)\right)
\]
The computation of parity can obviously be done in time $\Theta(1)$. The
routing requires computing which direction to ``send'' the message:
up, right, or left. This can easily be accomplished in $\Theta\left(\log N\right)$
time (that is, proportional to the length of the address).

The proposed algorithm also uses roughly the same fraction of node
each clock cycle, so we can assume for scaling rules the switching
activity factor ($q$) is constant. The area of such a circuit scales as $A=\Theta(\log^2 N)$.  Thus:

\[
E=qAT=\Theta(N^{3/2}\log^{4}N)
\]

\subsection{Decoding Mesh Network\label{sub:Decoding-Mesh-Network}}

Clearly, because of the requirement of successively computing each
estimate in polar decoding, asymptotically the number of clock cycles
for a polar decoding scheme must scale at least as $\Omega(N)$. A
fully parallel polar decoder thus must have area-time complexity at
least $\Omega(N^{2})$.

However, the algorithm proposed by Arikan \cite{ArikanPolar} that
takes time complexity $O(N\log N)$ can easily be implemented on a
mesh network. Each node of the mesh network corresponds to a row of
nodes of the graph in Figure~\ref{fig:decoderGraph}. As described by Arikan \cite{ArikanPolar}, a depth first message-passing procedure
between the nodes of the graph can compute the polar decoding in Turing time complexity
$O(N\log N)$. The distance between any two nodes in a mesh network
is not greater than $O(N^{1/2})$. Thus, decoding on a mesh network
takes $A=\Theta\left(N\log^{2}N\right)$ and $T=\Theta(N^{3/2}\log^{2}N)$,
where the algorithm takes $O(N\log N)$ steps, and $O(N^{1/2}\log N)$
time to do the message passing. Since a fully
parallel decoding algorithm requires only a single processing node
to be active at a given time, the switching activity factor of this
scheme scales as $\Theta\left(1/N\right)$. Thus the energy of the
computation scales as $E=O\left(N^{3/2}\log^{4}N\right)$.

\section{Generalized Polar Coding on a Mesh Network}\label{sec:generalizedOnMesh}
Arikan \cite{ArikanPolar} proposes a generalization of polar codes in which the generator matrix is no longer $G_1$ as defined in Section~\ref{sec:Polar-Encoders-Lower}. Hassani \emph{et al.} \cite{HassaniAlishahiUrbanke} analyze such schemes and show that there exist generating matrices in which for sufficiently large $N$, $P_\mathrm{N}\le e^{-n^{1-\epsilon}}$ for any $\epsilon >0$. That is, they are $e^{-n^{1-\epsilon}}$-coding schemes. By \cite[Theorem 1]{GroverInfoFriction}, such circuits must have bit-meters energy that scales as $E\ge \Omega (N^{3/2-\epsilon/2})$.  Both \cite{ArikanPolar} and \cite{HassaniAlishahiUrbanke} argue that such schemes have $O(N\log N)$ time complexity algorithms for decoding. When implemented on a mesh network, such algorithms would take energy complexity $\Theta(N^{3/2}\log^4 N)$ for the same reasons described in Section~\ref{sub:Decoding-Mesh-Network}. Thus, we can say that the general lower bounds can be almost reached for such close-to-exponential coding schemes (that is, within a factor of $N^\epsilon \mathrm{polylog}(N)$). Such a scheme would have a switching activity factor $q$ that scales as $\Theta (1/N)$. Such a circuit would take number of clock cycles that scales as $T=\Theta (N^{3/2}\log^2 N)$. Note however that the Thompson complexity analysis of \cite{blakeKschischangEnergyLatencyReliabilityTradeoffs} suggests a lower bound $T(N)$ of $\Omega (N^{1/2})$, and thus this method does not simultaneously reach these energy and time lower bounds.

We conjecture that the Thompson model lower bounds for time and energy of \cite{blakeKschischangEnergyLatencyReliabilityTradeoffs} for encoding can be reached with polar encoding up to an $N^\epsilon \mathrm {polylog}(N)$ factor for small $\epsilon$. This is because parallelization of the encoding procedure is possible. However, such a parallel algorithm must avoid conflicts on the mesh network, so proving this remains an open question.

\section{Energy Scaling as Function of gap to Capacity\label{sec:Energy-Scaling-as-function-gap-to-capacity}}
In this section, we consider how the energy of polar codes scales as
capacity is approached.\begin{defn}For a particular code, let $\chi=\frac{1}{1-\frac{R}{C}}$ be the \emph{reciprocal gap to capacity}. \end{defn}
Note that as rate approaches capacity, $\chi$ approaches infinity.

Guruswami \emph{et al.} \cite{GuruswamiXia} show that as a function
of reciprocal gap to capacity, the block length required
to achieve a set probability of block error $P_{\mathrm{e}}$ for
polar codes scale as $N=O\left(\chi^{\mu}\right)$
for some value $\mu$; that is, the block length scales polynomially
in the reciprocal gap to capacity. Hassani \emph{et al.} \cite{HassaniAlishahiUrbanke}
show that $3.55\le\mu\le6$. Goldin \emph{et al. }\cite{GoldinBurshtein}
improve the upper bound on $\mu$ to $5.7$ and Mondelli \emph{et
al. }\cite{MondelliHassaniUrbanke} further improve the upper bound
to $4.7$. These bounds, combined with Theorems~\ref{encodingLowerBoundTheorem} and \ref{decoderLowerBoundTheorem} and the discussion in Section~\ref{sec:Upper-Bounds-mesh-network} that bound energy of encoding and decoding in terms of $N$ by:
\[
\Omega\left(N^{1.5}\right)\le E_{\mathrm{comp}}\le O\left(N^{1.5}\log^{4}N\right)
\]
imply an obvious corollary.
\begin{cor}
	The energy for polar encoders with reciprocal gap to capacity $\chi$ and
	a set probability of block error, in which $C>\frac{1}{2}$, is bounded
	by:
	\begin{equation}
	\Omega\left(\chi^{5.325}\right)\le E_{\mathrm{comp}}\le O\left(\chi^{7.05}\log^{4}\left(\chi\right)\right)\label{eq:fractionOfCapacityLowerBounds}
	\end{equation}
	with decoding energy bounded similarly. 
\end{cor}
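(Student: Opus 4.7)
The plan is to obtain the corollary by direct composition of two kinds of bounds already assembled in the paper: the energy-versus-blocklength bounds, which sandwich $E_{\mathrm{comp}}$ between $\Omega(N^{3/2})$ and $O(N^{3/2}\log^4 N)$, and the finite-length scaling results of \cite{HassaniAlishahiUrbanke, MondelliHassaniUrbanke} relating the blocklength required to reach a fixed block-error probability to the reciprocal gap to capacity $\chi$. The entire argument reduces to a substitution, and the numeric exponents $5.325$ and $7.05$ are simply $1.5 \times 3.55$ and $1.5 \times 4.7$ respectively.

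For the lower bound, I would proceed as follows. Fix the target block-error probability, and consider a sequence of polar coding schemes whose rates approach capacity. By the scaling exponent lower bound $\mu \ge 3.55$ of \cite{HassaniAlishahiUrbanke}, any such sequence has blocklength at least $N \ge \Omega(\chi^{3.55})$. Since $C > 1/2$ is assumed, for all sufficiently large $\chi$ we have $R = C(1-1/\chi) > 1/2$, so Theorem~\ref{encodingLowerBoundTheorem} applies and gives $E_{\mathrm{comp}} \ge \Omega(N^{3/2})$ with a positive constant depending only on $(1-2R)$, which is bounded away from $0$ as $\chi \to \infty$. Chaining the two inequalities yields $E_{\mathrm{comp}} \ge \Omega\bigl((\chi^{3.55})^{3/2}\bigr) = \Omega(\chi^{5.325})$.

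For the upper bound, I would use the improved scaling-exponent upper bound $\mu \le 4.7$ from \cite{MondelliHassaniUrbanke}, which produces a sequence of polar codes with blocklength $N \le O(\chi^{4.7})$ achieving the target block-error probability. Implementing the Arikan encoder on the mesh-network architecture of Section~\ref{sec:Upper-Bounds-mesh-network} gives $E_{\mathrm{comp}} \le O(N^{3/2}\log^4 N)$. Substituting the blocklength bound and using $\log(\chi^{4.7}) = \Theta(\log \chi)$ yields
\[
E_{\mathrm{comp}} \le O\bigl(\chi^{7.05}\log^4\chi\bigr),
\]
which is the claimed upper bound. The matching decoder bound follows by the identical substitution applied to Theorem~\ref{decoderLowerBoundTheorem} and the mesh-network decoder of Section~\ref{sub:Decoding-Mesh-Network}, under the corresponding rate hypothesis.

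The only delicate point, rather than an obstacle, is verifying that the rate hypotheses of Theorem~\ref{encodingLowerBoundTheorem} (requiring $R > 1/2$) and Theorem~\ref{decoderLowerBoundTheorem} (requiring $R > 2/3$ for the decoder analogue) remain active as $\chi \to \infty$; this is precisely why the hypothesis $C > 1/2$ is present, and why the decoder statement would analogously require $C > 2/3$. A minor secondary concern is that the constant in $\Omega(N^{3/2})$ from Theorem~\ref{encodingLowerBoundTheorem} degrades like $(2R-1)$, but since $R \to C$ and $C$ is a fixed constant strictly greater than $1/2$, this prefactor is bounded away from zero for all sufficiently large $\chi$, so it can be absorbed into the $\Omega(\cdot)$ notation. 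No genuinely new technical step is required beyond collecting the ingredients.
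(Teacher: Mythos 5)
Your proposal is correct and takes essentially the same approach as the paper, which derives the corollary exactly by composing the finite-length scaling bounds $3.55\le\mu\le 4.7$ with the $\Omega\left(N^{3/2}\right)$ lower bound and $O\left(N^{3/2}\log^{4}N\right)$ mesh-network upper bound, yielding the exponents $1.5\times 3.55=5.325$ and $1.5\times 4.7=7.05$. Your remark that the decoder analogue would require $C>\frac{2}{3}$ is a detail the paper leaves implicit in the phrase ``bounded similarly,'' and your handling of the $(2R-1)$ prefactor is slightly more careful than the paper's terse presentation.
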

 Note that polar codes are $e^{-N^{\frac{1}{2}-\epsilon}}$-codes
\cite{arikanTeletarRateofPolarization} for any $\epsilon>0$. We can apply
the well known general lower bound on block length of any code as a
function of fraction of capacity \cite{Strassen} of $N\ge\Omega
(\chi^2)$ and the general lower bound of
\cite[Theorem 1]{blakeKschischangEnergyLatencyReliabilityTradeoffs} to show that all $e^{-N^{\frac{1}{2}-\epsilon}}$-encoding and decoding schemes have energy bounded by:
\[
E\ge\Omega\left(N^{\frac{5}{4}-\epsilon}\right)\ge\chi^{2.5}
\]
When contrasted with the lower bounds of (\ref{eq:fractionOfCapacityLowerBounds}), this
illuminates a gap between general lower bounds and that which is achievable
through polar coding.

\section{Future Work\label{sec:Future-Work}}

Extending the lower bound results to polar codes with rates less than
$1/2$ for encoder lower bounds and less than $2/3$ for decoding lower bounds is an obvious area of future work. Moreover, we have not actually implemented the mesh network topology analyzed in Section \ref{sec:Upper-Bounds-mesh-network}, so this also remains an area of future work. 

Our decoding lower bounds are for algorithms based on graphs obtained from the butterfly network decoding graph suggested by Arikan \cite{ArikanPolar}. Our lower bound does not necessarily include all decoders based on successive cancellation decoding for polar codes. A particular challenge for generalizing this result is in defining precisely what a circuit that performs a polar decoding algorithm actually is, and thus this remains an area of future work.

Currently, our results apply to polar codes as defined in \cite{ArikanPolar}. Arikan
also suggested that other matrices can be used to generate polar codes
with similar decoding and encoding complexity. Korada \emph{et al.
} \cite{KoradoSasogluUrbanke} showed that there exist matrices which
generate polar codes that have block error probability close to $O(2^{-N})$.
Our work in \cite{blakeKschischangEnergyLatencyReliabilityTradeoffs} shows that such codes necessarily have encoding and
decoding energy that scales as $\Omega\left(n^{1.5}\right)$, and
implementation over a mesh network as described in Section~\ref{sec:Upper-Bounds-mesh-network} allows
for this encoding and decoding in a way that scales as $O(n^{1.5}\mathrm{polylog}\left(n\right)$).
Thus, this means polar codes are energy-optimal over all exponentially
low probability of error decoders.

However, a full understanding of VLSI complexity of polar encoding
should relate the generator matrix to their asymptotic VLSI complexity,
including energy, time, and area tradeoffs. Our lower bound approach
here works because of the structure of the $G_{1}$ matrix, allowing
us to derive Lemma \ref{lem:LemmaForRecursion}. This lemma is used
in a key step in the inductive proof showing that polar encoding has
high VLSI complexity. However, this approach does not seem to extend to different
generator matrices; thus we cannot rule out that there are some polarizing
matrices with encoding energy that scales better than $\Omega(n^{1.5})$.
These may possibly come at the cost of higher error probability. For
engineering purposes, however, this may be a worthwhile tradeoff,
and the analysis of polar coding may be a tool to characterize achievable
coding schemes that reach different points in this fundamental tradeoff.

\section{Appendix 1}

\appendices{}

\renewcommand{\thesubsection}{\Alph{subsection}}

\section{}\label{app:uniformOutputRectangleMBWLemma}

\begin{IEEEproof}
	(Of Lemma~\ref{lem:rectangleMBWEnergyLemma})
	Note that across any bisection
	of the communication graph of a given circuit, there are at least $\omega$
	edges that cross that bisection, with total length at least $\omega$.
	We can construct a single bisection with a line through the middle
	of the nodes. Then we can construct another bisection by shifting
	this line left one unit, and sweeping in a parallel line from the
	right. Such a bisection has half the nodes in between the two lines and half the nodes outside the two lines.
	We can then shift these two lines left one unit again. We can
	do this on the order of $\sqrt{N}$ times and each time the edges
	crossing the bisecting lines must be at least $\omega$, and each
	time the bisecting lines are in a different position, thus for each
	bisection the parts of the lines that cross the bisecting lines are
	different. In total there must be at least $\Omega\left(\sqrt{N}\omega\right)$
	total distance of the lines and thus this amount of total energy.
\end{IEEEproof}

\section{Mesh Network Encoding Procedure}\label{app:procedudeAvoidsConflicts}

\begin{figure} 
	\centering
	\includegraphics{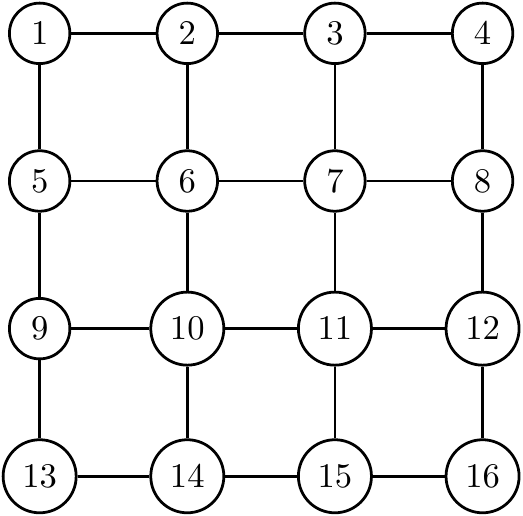}
	\caption{The "raster-scan ordering" proposed for the nodes $i=1$ to $16$ for an $N=16$ polar
		encoding circuit. Note that for general $N$ these may not fit perfectly
		on a square. In that case, we propose using dimensions that are as
		close to square-like as possible. In Appendix~\ref{app:procedudeAvoidsConflicts} we show that labelling nodes like this and applying our proposed message-passing procedure for polar encoding avoids conflicts and thus is valid.}
	\label{fig:meshNodeOrdering}
\end{figure}

We implement the message-passing procedure on a mesh network using $n=\log N$ stages, one stage for each pair of adjacent columns in the encoding graph of Figure~\ref{fig:encodingGraph}. Label the input nodes of the encoding graph in order, starting from the top, from $1$ to $N$. For each input node $i$ of the encoding graph, associate a mesh network processor node $i$. This processor node is to perform all the computations and message passings of the graph nodes in the same row as its associated input node. Place the processor nodes on the $\sqrt N \times \sqrt N$ mesh network in the order shown in Figure.~\ref{fig:meshNodeOrdering}. We call such an ordering a \emph{raster scan ordering}. Note that by inspecting Figure~\ref{fig:encodingGraph}, in the $j$th stage of the procedure, each processor node $i$ must pass messages to node $i-2^{n-j}$. 

There is some ambiguity in our definition of the operation of the mesh network: when a message is received by a processor node, in which direction should the message be sent? It is clear that if the message is located "above and to the left" of the node, the message should be passed either to the left or above. We shall adopt the convention that a node shall choose the relevant up or down direction before deciding to send the node left or right, which occurs at what we will call the \emph{target row} (that is, the row containing the computational node to which the message is sent).

We can label each row of the mesh network, and thus some  nodes will be on even rows, and some nodes will be on odd rows. In our proposed procedure, each stage of the encoding will be divided into two message-passing steps: the "even-row" passing step and the "odd-row" passing step. More precisely, at the $j$th stage of the encoding procedure, only  nodes $i$ on even rows that are to send their bits  to node $i-2^{n-j-1}$ shall do so. Then, the appropriate nodes on odd rows are to do the same. We claim this procedure avoid conflicts (that is, no two messages will be sent to the same node simultaneously).

\begin{defn} A \emph{constant send-back procedure} is a message passing procedure defined on a mesh network with nodes labelled according to the raster-scan ordering in which a set of nodes, indexed $i$, each simultaneously send a message to node $i-m$, for some $m>0$.
	\end{defn}
The "even row" sending step and the "odd row" sending step of the procedure we propose for polar encoding is obviously a special case of this procedure.

\begin{lem} \label{lem:sameRowNoConflicts}
	In any constant send-back procedure, conflicts can only occur with messages originating on different rows.\end{lem}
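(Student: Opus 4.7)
The plan is to track, step by step, the positions of two messages that originate on the same row and to show that no clock cycle ever finds them at the same mesh node. Label a mesh position by its (row, column) pair, and suppose the two sources sit at $(r,a)$ and $(r,b)$ with $a\ne b$. By the raster-scan ordering, indices $i$ and $j$ on row $r$ satisfy $i=(r-1)\sqrt{N}+a$ and $j=(r-1)\sqrt{N}+b$. Writing the common send-back offset as $m=k\sqrt{N}+s$ with $0\le s<\sqrt{N}$, the target of $i$ is located at $(r-k,a-s)$ if $a>s$ and at $(r-k-1,a-s+\sqrt{N})$ if $a\le s$, and similarly for $j$. In particular, the vertical distance the message must travel and the horizontal direction in which it eventually turns are dictated entirely by whether the source column is above or below the threshold $s$.

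The argument naturally splits into three cases. First, when $a>s$ and $b>s$, both messages travel $k$ rows up and then the same number $s$ of columns to the left. Writing down the position at each clock cycle, during the vertical phase each message stays in its own source column ($a$ for one, $b$ for the other), and during the horizontal phase both sit on row $r-k$ but at columns that always differ by the nonzero constant $a-b$. Second, when $a\le s$ and $b\le s$, the same reasoning applies with $k+1$ vertical steps and a rightward horizontal phase of length $\sqrt{N}-s$, and again the two position sequences differ at every step by $a-b$ in the column coordinate.

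The mixed case, say $a\le s<b$, is the only one that requires a little more care, and will be the main obstacle. Here the vertical distances differ by one and the horizontal directions are opposite: message $i$ rises $k+1$ rows and then moves right, while message $j$ rises $k$ rows and then moves left. I would verify that for $t\le k$ both messages are vertical in distinct columns $a$ and $b$; at $t=k+1$ message $j$ has just turned onto row $r-k$ while message $i$ is still vertical in column $a$, placing them on distinct rows; and for $t\ge k+2$ message $i$ is pinned to row $r-k-1$ while message $j$ is pinned to row $r-k$, so their horizontal trajectories live on disjoint rows throughout.

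With these three subcases disposed of, one concludes that no two messages originating on the same row ever occupy a common node at a common clock cycle, which is precisely the claim of the lemma. The routine bookkeeping lies in the case split around the wrap-around at column $s$; once that is handled, the key structural observation is simply that a constant send-back applied to sources sharing a row produces trajectories that are horizontal translates of one another whenever they do not sit on disjoint target rows.
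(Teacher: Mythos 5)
Your argument is correct and takes essentially the same route as the paper's proof: the paper likewise splits on whether the two same-row sources share a target row (equivalently, whether both source columns lie on the same side of the wrap-around threshold $s$), and uses the constant offset $m$ to conclude that messages sharing a target row must turn in the same direction and hence keep a fixed horizontal separation, while messages with different target rows spend their horizontal phases on disjoint rows. Your version just makes explicit the coordinate bookkeeping (the trajectories $(r-t,a)$, the turn times $k$ and $k+1$, the constant column gap $a-b$) that the paper's terser proof leaves implicit.
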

\begin{IEEEproof}
	Consider two messages originating on the same row. Since our convention is to send nodes "up" until deciding to send them left or right, a conflict between these two messages can only occur on the target row in which one message is sent left, and the other sent right. However, because of the ordering of the processor nodes, and the fact that we are considering a constant send-back procedure, upon reaching the target row, these messages must be sent in the same direction, otherwise they are not addressed to nodes a constant value less than their index. If these nodes do not have the same target rows, then the lemma flows trivially.
\end{IEEEproof}
\begin{lem} \label{lem:2RowsApart}
	Messages originating on rows spaced at least $2$ rows apart cannot
	have the same target rows and can not conflict.\end{lem}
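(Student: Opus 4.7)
The plan is to work in raster-scan coordinates $i\mapsto(r(i),c(i))$ where $r(i)=\lceil i/\sqrt{N}\rceil$ and $c(i)=((i-1)\bmod\sqrt{N})+1$, so that $i=(r(i)-1)\sqrt{N}+c(i)$. By the routing convention stated just before Lemma~\ref{lem:sameRowNoConflicts}, the message from node $i$ first travels vertically upward in column $c(i)$ from row $r(i)$ to the target row $r(i-m)$, requiring $t_i:=r(i)-r(i-m)$ clock cycles, and then horizontally along row $r(i-m)$ to column $c(i-m)$. I would first establish that messages whose sources lie in rows at least two apart have distinct target rows, and then use this to rule out every way in which the two messages could coincide at a single node at a single clock tick.

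For the target-row claim, assume without loss of generality $r(i)-r(j)\ge 2$. The identity $i-j=(r(i)-r(j))\sqrt{N}+c(i)-c(j)$ combined with $|c(i)-c(j)|\le\sqrt{N}-1$ forces $i-j\ge\sqrt{N}+1$. Since any single raster row contains only $\sqrt{N}$ consecutive indices, the equality $(i-m)-(j-m)=i-j>\sqrt{N}-1$ implies $r(i-m)>r(j-m)$, proving the first half of the lemma.

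For the no-conflict claim I would do a case analysis on whether each message is in its vertical or its horizontal phase at the hypothetical collision time $t$. The both-vertical case is immediate because the two messages maintain a constant row gap of $r(i)-r(j)\ge 2$; the both-horizontal case is immediate from the distinct-target-row conclusion just established. The two mixed cases are the real work. If the $i$-message is still vertical ($t<t_i$) while the $j$-message has already turned ($t\ge t_j$), a coincidence at row $r(j-m)$ in column $c(i)$ forces $t=r(i)-r(j-m)$, and combining with $t<t_i$ yields $r(j-m)>r(i-m)$, contradicting the target-row inequality. Symmetrically, if the $j$-message is still vertical while the $i$-message has turned, the analogous equations give $t=r(j)-r(i-m)$ with $t\ge t_i$, which collapses to $r(j)\ge r(i)$, contradicting the source-row assumption.

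The main obstacle I expect is the bookkeeping in the mixed-phase cases, since one has to track exactly when each message turns and which node it occupies at each tick. Once the distinct-target-row inequality is in hand, however, each mixed case reduces to a single inequality that violates either the source-row ordering $r(i)>r(j)$ or the target-row ordering $r(i-m)>r(j-m)$, so no additional ideas should be required beyond the raster-scan index arithmetic.
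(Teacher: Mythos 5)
Your proof is correct and follows the same two-part outline as the paper: first show the targets land on distinct rows via the index-spacing bound $i-j\ge\sqrt{N}+1$ versus the $\sqrt{N}$ indices per row (the paper phrases this with $x$ nodes per row and spacing $x+1$), then rule out collisions by a phase analysis. Where you genuinely diverge is in the mixed vertical/horizontal case. The paper disposes of it with a purely qualitative ordering claim --- that an upward-travelling message which originated below the horizontally-moving message ``cannot have a target row at the same level or higher'' --- which, read literally, is not true in general (a source lying strictly between the other message's target row and source row can perfectly well have a target row above the other's target row; e.g.\ with $\sqrt{N}=4$, $m=8$, sources $13$ and $9$). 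Your explicit timing computation ($t=r(i)-r(j-m)$ forced by the collision, played against $t<t_i$ or $t\ge t_i$) is what actually closes that case: the vertical message sweeps through the horizontal message's row strictly before the latter arrives there, so no clock tick puts them at the same node. In short, you buy rigor where the paper's argument is at its most hand-wavy, at the cost of a little coordinate bookkeeping; no gap on your side.
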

\begin{IEEEproof} 
	Let $x$ be the number of processor nodes in each row.
	Clearly, the spacing between two nodes at least $2$ rows apart is at least $x+1$ (occurring
	when the lesser indexed node is at the far right, and the greater indexed
	node is on the far left). Suppose their target node was on the same
	row. The spacing between these two target indices must at least be $x+1$,
	but there are only $x$ elements on each row. Those, messages on rows spaced two or more apart cannot conflict in their target rows. It may be possible for them to conflict where one message has reached a target row, and is travelling left or right, and another is travelling up. However, the message travelling up must have originated on a row below the left or right-going node. In a constant send-back procedure such a message cannot have a target row at the same level or higher than the other node, so this cannot occur.  
\end{IEEEproof}

Since in each step, at each stage of our proposed procedure, no two simultaneously sent messages originating on adjacent rows, combining Lemmas~\ref{lem:sameRowNoConflicts} and \ref{lem:2RowsApart} confirms that our proposed procedure has no conflicts.

This particular message passing
ordering is done entirely to prove that the area-time complexity \emph{scaling} of
this mesh scheme is close to the lower bound. It is likely
that in any practical implementation a more efficient message passing
procedure exists (though it will likely be more efficient only up to some
constant factor).

\bibliographystyle{IEEEtran}
\bibliography{bibtextDoc}

\end{document}